\documentclass[sigconf]{acmart}

\AtBeginDocument{%
  \providecommand\BibTeX{{%
    \normalfont B\kern-0.5em{\scshape i\kern-0.25em b}\kern-0.8em\TeX}}}


\settopmatter{printacmref=false} 
\renewcommand\footnotetextcopyrightpermission[1]{} 
\setcopyright{none}
\acmDOI{}

\usepackage{graphicx}
\usepackage{footnote} 
\usepackage{balance}  
\usepackage{amsmath,amsfonts} 
\usepackage{bm}
\usepackage{algorithmic}
\usepackage{graphicx}
\usepackage{textcomp}
\usepackage[labelformat=empty,justification=centering]{subfig} 
\usepackage{soul,color}
\usepackage{xspace}
\usepackage{listings}
\usepackage{verbatimbox}
\usepackage{url}
\usepackage{stmaryrd}
\usepackage{soul,color}
\usepackage{xcolor}
\definecolor{codegreen}{rgb}{0,0.6,0}

\newcommand{\rdfframes}{RDFFrames\xspace}
\newcommand{\rdfframe}{RDFFrame\xspace}
\newcommand{\sparql}{SPARQL\xspace}
\newcommand{\squishlist}{
  \begin{list}{$\bullet$}
   {
     \setlength{\itemsep}{0pt}
     \setlength{\parsep}{0pt}
     \setlength{\topsep}{0pt}
     \setlength{\partopsep}{0pt}
     \setlength{\leftmargin}{1.5em}
     \setlength{\labelwidth}{1em}
     \setlength{\labelsep}{0.5em} } }
\newcommand{\squishend}{
   \end{list}  }

\newcommand{\hide}[1]{}

\newtheorem{definition}{Definition}
\newtheorem{lemma}{Lemma}
\newtheorem{theorem}{Theorem}


\DeclareFontEncoding{LS1}{}{}
\makeatletter

\DeclareFontSubstitution{LS1}{stix}{m}{n}

\DeclareSymbolFont{symbols2}      {LS1}{stixfrak} {m} {n}

\DeclareMathSymbol{\leftouterjoin}            {\mathop}   {symbols2}{"11}
\DeclareMathSymbol{\rightouterjoin}           {\mathop}   {symbols2}{"12}
\DeclareMathSymbol{\fullouterjoin}            {\mathop}   {symbols2}{"13}
\DeclareMathSymbol{\innerjoin}                     {\mathop}   {symbols2}{"ED}

\newenvironment{packed_enum}{
\begin{enumerate}
  \setlength{\itemsep}{1pt}
  \setlength{\parskip}{0pt}
  \setlength{\parsep}{0pt}
}{\end{enumerate}}


\makeatletter
\renewcommand\@formatdoi[1]{\ignorespaces}
\makeatother


\begin{document}

\title{\rdfframes: Knowledge Graph Access for Machine Learning Tools}

\author{
  Aisha Mohamed{$^{1*}$}
  \mbox{     }
  Ghadeer Abuoda{$^{1\mathsection}$}
  \mbox{     }
  Abdurrahman Ghanem{$^{2\dagger}$} 
  \and Zoi Kaoudi{$^{2\ddagger}$} 
  \mbox{     }
  Ashraf Aboulnaga{$^{*}$}
}
\affiliation{
  \institution{$^{*}$ Qatar Computing Research Institute, HBKU}
  \institution{$^{\mathsection}$ College of Science and Engineering, HBKU}
  \institution{$\dagger$Bluescape}
  \institution{$^{\ddagger}$Technische Universit\"{a}t Berlin}
}\thanks{$^{1}$ Joint first authors.}
\thanks{$^{2}$ Work done while at QCRI.}
  \email{{ahmohamed, gabuoda, aaboulnaga}@hbku.edu.qa}
  \email{ghanemabdo@gmail.com, zoi.kaoudi@tu-berlin.de}
  

\begin{abstract}
Knowledge graphs represented as RDF datasets are integral to many machine learning applications. RDF is supported by a rich ecosystem of data management systems and tools, most notably RDF database systems that provide a SPARQL query interface. Surprisingly, machine learning tools for knowledge graphs do not use SPARQL, despite the obvious advantages of using a database system. This is due to the mismatch between SPARQL and machine learning tools in terms of data model and programming style. Machine learning tools work on data in tabular format and process it using an imperative programming style, while SPARQL is declarative and has as its basic operation matching graph patterns to RDF triples. We posit that a good interface to knowledge graphs from a machine learning software stack should use an imperative, navigational programming paradigm based on graph traversal rather than the SPARQL query paradigm based on graph patterns. In this paper, we present RDFFrames, a framework that provides such an interface. RDFFrames provides an imperative Python API that gets internally translated to SPARQL, and it is integrated with the PyData machine learning software stack. RDFFrames enables the user to make a sequence of Python calls to define the data to be extracted from a knowledge graph stored in an RDF database system, and it translates these calls into a compact SPQARL query, executes it on the database system, and returns the results in a standard tabular format. Thus, RDFFrames is a useful tool for data preparation that combines the usability of PyData with the flexibility and performance of RDF database systems.
\end{abstract}

\settopmatter{printfolios=true}
\maketitle
\pagestyle{plain} 

\section{Introduction}
\label{sec:intro}

There has recently been a sharp growth in the number of knowledge graph datasets that are made available in the RDF (Resource Description Framework)\footnote{\url{https://www.w3.org/RDF}} data model.
Examples include knowledge graphs that cover a broad set of domains such as DBpedia~\cite{lehmann2015dbpedia}, YAGO~\cite{yago2}, Wikidata~\cite{vrandecic12wikidata},
and BabelNet~\cite{Navigli:2012:BAC:2397213.2397579}, 
as well as specialized graphs for specific domains like 
product graphs for e-commerce~\cite{Dong:2018:CIB:3219819.3219938}, biomedical information networks~\cite{belleau2008bio2rdf}, 
and bibliographic datasets~\cite{mccallum2000automating, giles1998citeseer}.
The rich information and semantic structure of knowledge graphs makes them 
useful in many machine learning applications~\cite{davis1993knowledge},
such as recommender systems~\cite{jenatton2012latent},
virtual assistants, and question answering systems~\cite{west2014knowledge}.
Recently, many machine learning algorithms have been developed specifically for 
knowledge graphs,
especially in the sub-field of \textit{relational learning}, which is dedicated to learning from
the 
relations between entities in a knowledge graph~\cite{nickel2015review, nguyen2017overview, wang2018multi}.

RDF is widely used to publish knowledge graphs as it provides a powerful abstraction for representing heterogeneous, incomplete, sparse, and potentially noisy knowledge graphs. RDF is supported by a rich ecosystem of data management systems and tools that has evolved over the years.
This ecosystem includes standard serialization formats, parsing and processing libraries,
and most notably RDF database management systems (a.k.a. \textit{RDF engines} or \textit{triple stores}) that support \sparql,\footnote{\url{https://www.w3.org/TR/sparql11-query}}
the W3C standard query language for RDF data.
Examples of these systems include
OpenLink Virtuoso,\footnote{\url{https://virtuoso.openlinksw.com}}
Apache Jena,\footnote{\url{https://jena.apache.org}}
and
managed services such as
Amazon Neptune.\footnote{\url{https://aws.amazon.com/neptune}}

\textit{However, we make the observation that none of the publicly available machine learning or relational learning tools for knowledge graphs that we are aware of uses \sparql
to explore and extract datasets from knowledge graphs stored in RDF database systems.} 
 This, despite the obvious advantage of using a database system such as data independence, declarative querying, and efficient and scalable query processing.
For example, we investigated all the prominent recent open source relational learning implementations,
and we found that they all rely on ad-hoc scripts to process very small knowledge graphs and prepare the necessary datasets for learning.
This observation applies to the implementations of published state-of-the-art embedding models, e.g., scikit-kge~\cite{nickel2016holographic, nickel2011three},\footnote{\url{https://github.com/mnick/scikit-kge}}
and also holds for the recent Python libraries that are currently used as standard implementations for training and benchmarking knowledge graph embeddings, e.g.,
Ampligraph~\cite{ampligraph}, OpenKE~\cite{han2018openke}, and PyKEEN~\cite{ali2020pykeen}.
These scripts are limited in performance, which slows down data preparation and leaves the challenges of applying embedding models on the scale of real knowledge graphs unexplored. 

We posit that machine learning tools do not use RDF engines due to an ``impedance mismatch.''
Specifically, typical machine learning software stacks are based on data in \textit{tabular format} and the split-apply-combine paradigm~\cite{Wickham_2011}.
An example tabular format is the highly popular \textit{dataframes},
supported by libraries in several languages such as Python and R
(e.g., the pandas\footnote{\url{https://pandas.pydata.org}}
and scikit-learn libraries in Python),
and by systems such as Apache Spark~\cite{apacheSpark}.
Thus, the first step in most machine learning pipelines
(including relational learning) 
is a data preparation step that
explores the knowledge graph,
identifies the required data,
extracts this data from the graph,
efficiently processes and cleans the extracted data,
and returns it in a table.
Identifying and extracting this refined data from a knowledge graph requires efficient and flexible \textit{graph traversal} functionality. \sparql is a declarative pattern matching query language
designed for distributed data integration with unique identifiers 
rather than navigation~\cite{Matsumoto2018MappingRG}. 
Hence, while \sparql has the expressive power to process and extract data into tables, machine learning tools do not use it  since it lacks the required flexibility and ease of use of \textit{navigational} interfaces.

In this paper, we introduce \textit{\rdfframes},
a framework that bridges the gap between machine learning tools and RDF engines.
\rdfframes is designed to support the data preparation step.
It defines a user API consisting of two type of operators: navigational operators that explore an RDF graph and extract data from it based on a graph traversal paradigm, and relational operators for processing this data into refined clean datasets for machine learning applications.
The sequence of operators called by the user represents a logical description of the required dataset.
\rdfframes translates this description to a corresponding \sparql query, executes it 
on an RDF engine,
and returns the results as a table.

In principle, the \rdfframes operators can be implemented in any programming language and can return data in any tabular format.
However, concretely, our current implementation of \rdfframes is a Python library that returns data as dataframes of the popular pandas 
library
so that further processing can leverage
the richness
of the PyData ecosystem.
\rdfframes is available as open source\footnote{\url{https://github.com/qcri/rdfframes}}
and via the Python 
pip installer. It is implemented in 6,525 lines of code, and was demonstrated in~\cite{rdfframes-demo}.

\vspace{4pt}
\noindent
\textbf{Motivating Example.}
We illustrate the end-to-end operation of \rdfframes through an example.
Assume the DBpedia knowledge graph is stored in an RDF engine,
and consider a machine learning practitioner who wants 
use DBpedia
to study
prolific American actors (defined as those who have starred in 50 or more movies).
Let us say that the practitioner wants to see
the movies these actors starred in
and the Academy Awards won by any of them.
Listing~\ref{lst:motivating_code} shows Python code using the \rdfframes API that prepares a dataframe with the
data required for this task.
It is important to note that this code is a \textit{logical description} of the dataframe and \textit{does not cause a query to be generated or data to be retrieved from the RDF engine.}
At the end of a sequence of calls such as these, the user calls a special \texttt{execute}
function that causes a \sparql query to be generated and executed on the engine, and the results to be returned in a dataframe.

The first statement of the code creates a two-column
\rdfframe with 
the URIs (Universal Resource Identifiers) of
all movies and all the actors who starred in them.
The second statement navigates from the actor column in this 
\rdfframe to get the birth place of each actor and uses a filter
to keep only American actors.
Next, the code finds all American actors who have starred
in 50 or more movies (prolific actors).
This requires grouping and aggregation, as well as a filter 
on the aggregate value.
The final step is to navigate from the actor column in the prolific actors 
\rdfframe to get the actor's Academy Awards (if available).
The result dataframe will contain the prolific actors, movies that they starred in, and their Academy Awards if available.
An expert-written \sparql query corresponding to Listing~\ref{lst:motivating_code} is shown in Listing~\ref{lst:motivating_query}.
\rdfframes provides an alternative to writing such a \sparql
query that is simpler and closer to the navigational paradigm and is
better-integrated with the machine learning environment.
The case studies in Section~\ref{subsec:case-studies} describe 
more complex data preparation tasks and present the
\rdfframes code for these tasks and the corresponding
\sparql queries.

\vspace{8pt}
\begin{lstlisting}[
  aboveskip=-0.0\baselineskip,
  belowskip=-0.0\baselineskip,
  language=Python,
  breaklines=true,
  showspaces=false,
  basicstyle=\ttfamily\scriptsize,
  commentstyle=\color{gray},
  otherkeywords={expand, filter, group_by, join, feature_domain_range, .count, cache, unique},
  keywordstyle=\color{blue},
  caption={\rdfframes code - Prolific American actors who have Academy Awards.},
  captionpos=b,
  label={lst:motivating_code}]
movies = graph.feature_domain_range('dbp:starring',
  'movie', 'actor')
american = movies.expand('actor',
  [('dbp:birthPlace', 'country')])\
  .filter({'country': ['=dbpr:United_States']})
prolific = american.group_by(['actor'])\
  .count('movie', 'movie_count')\
  .filter({'movie_count': ['>=50']})
result = prolific.expand('actor', [('dbpp:starring',
  'movie', INCOMING), ('dbpp:academyAward', 'award',
   OPTIONAL)])
\end{lstlisting}

\vspace{8pt}
\begin{lstlisting}[
aboveskip=-0.0\baselineskip,
belowskip=-0.0\baselineskip,
language=SQL,
breaklines=true,
showspaces=false,
basicstyle=\ttfamily\scriptsize,
commentstyle=\color{gray},
lineskip=-0.7ex,
breakatwhitespace=true,
keywordstyle=\color{blue},
otherkeywords={OPTIONAL, FILTER, UNION, SELECT, GROUP BY ,WHERE, FROM },
caption={Expert-written \sparql query corresponding to \rdfframes code shown in Listing~\ref{lst:motivating_code}.},
label={lst:motivating_query},
captionpos=b,
]

SELECT  *
FROM <http://dbpedia.org>
WHERE
  { ?movie  dbpp:starring  ?actor
    { SELECT DISTINCT  ?actor
        (COUNT(DISTINCT ?movie) AS ?movie_count)
      WHERE
        { ?movie  dbpp:starring    ?actor .
          ?actor  dbpp:birthPlace  ?actor_country
          FILTER ( ?actor_country = dbpr:United_States )
        }
      GROUP BY ?actor
      HAVING ( COUNT(DISTINCT ?movie) >= 50 )
    }
    OPTIONAL
      { ?actor  dbpp:academyAward  ?award }
  }
\end{lstlisting}

\begin{figure}
  \centering
  \includegraphics[width=0.8 \columnwidth]{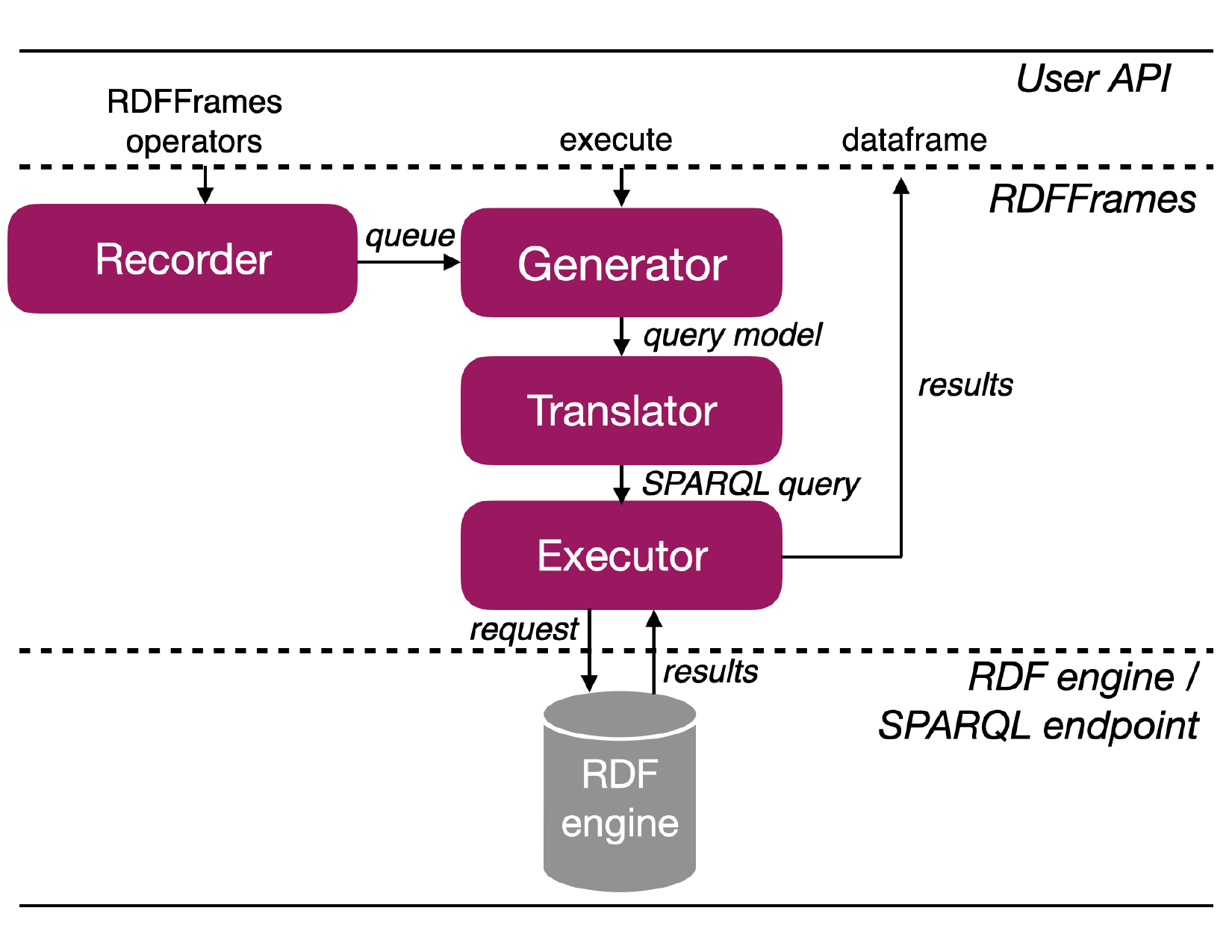}
  \caption{\rdfframes architecture.}
  \label{fig:architecture}
  \vspace*{-12pt}
\end{figure}

\noindent
\textbf{\rdfframes in a Nutshell.}
The architecture of \rdfframes is shown in Figure~\ref{fig:architecture}.
At the top of the figure is the user API, which consists of a set of operators implemented as Python functions.
We make a design decision in \rdfframes to use a \textit{lazy evaluation strategy}.
Thus, the \texttt{Recorder} records the operators invoked by the user without executing them,
storing the operators in a FIFO queue.
The special \texttt{execute} operator causes the 
\texttt{Generator} to consume the operators in the queue and build a {\em query model} representing the user's code.
The query model is an intermediate representation for
\sparql queries. The goal of the query model is (i)~to separate the API parsing logic from the query building logic for flexible manipulation and implementation, and (ii)~to facilitate optimization techniques for building the queries, especially in the case of nested queries.
Next, the \texttt{Translator} translates the query model into a \sparql query.
This process includes validation to ensure that the generated query has valid \sparql syntax and is equivalent to the user's API calls.
Our choice to use lazy evaluation means that the entire sequence of operators called by the user is captured in the query model processed by the \texttt{Translator}.
We design the \texttt{Translator} to take advantage of this fact and generate
\textit{compact and efficient} \sparql queries.
Specifically, each query model is translated to one \sparql query and the \texttt{Translator} minimizes the number of nested subqueries.
After the \texttt{Translator}, the \texttt{Executor} sends the
generated \sparql
query to an RDF engine or \sparql endpoint, handles all communication issues,
and returns the results to the user in a dataframe.

\noindent

\textbf{Contributions:}
The novelty of \rdfframes lies in:
\squishlist
\item
First, the API provided to the user is designed to be intuitive and flexible,
in addition to being expressive.
The API consists of navigational operators and data processing operators based on familiar relational algebra operations
such as filtering, grouping, and joins (Section~\ref{sec:data-model}).

\item
Second, \rdfframes translates the API calls into efficient \sparql queries.
A key element in this is the query model which exposes query equivalences in a simple way.
In generating the query model from a sequence of API calls and in generating the \sparql query from the query model,
\rdfframes has the overarching goal of generating efficient queries
(Section~\ref{sec:query-generation}).
We prove the correctness of the translation from API calls to \sparql. That is, we prove that the dataframe that \rdfframes returns is semantically equivalent to the results set of the generated \sparql query (Section~\ref{sec:correctness}).

\item
Third, \rdfframes handles all the mechanics of processing the \sparql query such as the connection to the RDF engine 
or \sparql endpoint, pagination (i.e., retrieving the results in chunks) to avoid the endpoint timing out, and converting the result to a dataframe. 
We present case studies and performance comparisons that validate our design decisions
and
show that \rdfframes outperforms several
alternatives
(Section~\ref{sec:performance}).

\squishend

\section{Related Work}
\label{sec:related}

\noindent
\textbf{Data Preparation for Machine Learning.}
It has been reported that 80\% of data analysis time and effort is spent on
the process of exploring, cleaning, and preparing the data~\cite{dasu2003exploratory},
and these activities have long been a focus of the database community.
For example, the recent Seattle report on database research~\cite{SeattleDBReport2019}
acknowledges the importance of these activities and the need to support data science ecosystems such as PyData and to ``devote more efforts on the end-to-end data-to-insights pipeline.''
This paper attempts to reduce the data preparation effort by defining a powerful API for accessing knowledge graphs.
To underscore the importance of such an API, note that~\cite{hidden_debt} makes the observation that most of the code in a machine learning system is devoted to tasks other than learning and prediction.
These tasks include collecting and verifying data and preparing it for use in machine learning packages.
This requires a massive amount of ``glue code'', and~\cite{hidden_debt} observes that this glue code can be eliminated by using well-defined common APIs for data access (such as \rdfframes).

Some related work focuses on the end-to-end  machine learning life cycle (e.g.,~\cite{tfx,mlflow,mldp}).
Some systems, such as MLdp~\cite{mldp}, focus primarily on managing input data, 
but they do not have special support for knowledge graphs. \rdfframes provides such support.

\vspace{4pt}
\noindent
\textbf{Database Support for PyData.}
Some recent efforts to provide database support for the PyData ecosystem focus on the scalability of dataframe operations, while other efforts focus on replacing SQL as the traditional data access API with pandas-like APIs.
Koalas\footnote{\url{https://koalas.readthedocs.io/en/latest}}
implements the pandas dataframe API on top of Apache Spark for better scalability.
Modin~\cite{modin} is a scalable dataframe system based on a novel formalism for pandas dataframes.
Ibis\footnote{\url{https://ibis-project.org}}
defines a variant of the dataframe API
(not pandas)
and translates it to SQL so that it can execute on a database system for scalability. Ibis also supports other backends such as Spark.
Koalas and Modin do not support SQL backends, and Ibis does not have a pandas API.
A recent system that addresses these limitations is
Magpie~\cite{magpie},
which
translates pandas operations to Ibis for scalable execution on multiple backends, including SQL database systems. Magpie chooses the best backend for a given program based on the program's complexity and the data size.
Grizzly~\cite{grizzly} is a framework that generates SQL queries from a pandas-like API and ships the SQL to a standard database system for scalable execution.
Grizzly relies on the database system's support for external tables in order to load the data.
It also 
creates user UDFs as native UDFs in the database system.
The AIDA framework~\cite{AIDA2018} allows users to write relational and linear algebra operators in Python and pushes the execution of these operators into a relational database system.

All of these recent works are similar in spirit to
\rdfframes in that they replace SQL for data access with a pandas-like API and/or rely on a database backend for scalability. However, all of the works focus on relational data and not graph data.
\rdfframes is the first to define a pandas-like API for graph data (specifically RDF),
and to support a graph database system as a scalable backend.

\vspace{4pt}
\noindent
\textbf{Why RDF?}
Knowledge graphs are typically represented in the RDF data model.
Another popular data model for graphs is the \textit{property graph} data model,
which has labels on nodes and edges as well as (property, value) pairs associated with both.
Property graphs have gained wide adoption in many applications and are supported by popular database systems such as Neo4j\footnote{\url{https://neo4j.com}}
and Amazon Neptune.
Multiple query languages exist for property graphs, and efforts are underway to define a common powerful query language~\cite{g-core}.

A popular query language for property graphs is Gremlin.\footnote{\url{https://tinkerpop.apache.org/gremlin.html}}
Like \rdfframes, Gremlin adopts a navigational approach to querying the graph,
and some of the \rdfframes operators are similar to Gremlin operators.
The popularity of Gremlin is evidence that a navigational approach is attractive to users.
However, all publicly available knowledge graphs including DBpedia~\cite{lehmann2015dbpedia} and YAGO~\cite{yago} are represented in RDF format. Converting RDF graphs to property graphs is not straightforward mainly because the property graph model does not provide globally unique identifiers and linking capability as a basic construct. 
In RDF knowledge graphs, each entity and relation is uniquely identified by a URI, and links between graphs are created by using the URIs from one graph in the other.
\rdfframes offers a navigational API similar to Gremlin to data scientists working with knowledge graphs in RDF format and facilitates the integration of this API with the data analysis tools of the PyData ecosystem. 

\vspace{4pt}
\noindent
\textbf{Why \sparql?}
\rdfframes uses \sparql as the interface for accessing knowledge graphs.
In the early days of RDF, several other query languages were proposed
(see~\cite{haase2004rdfquerylanguages} for a survey), but none of them has seen broad adoption, and \sparql has emerged as the standard.

Some work proposes navigational extensions to \sparql (e.g.,~\cite{nsparql, sparqler}), but these proposals add complex navigational constructs such as path variables
and regular path expressions to the language.
In contrast, the navigation used in \rdfframes is simple and well-supported by standard \sparql
without extensions.
The goal of \rdfframes is not complex navigation, but rather providing a simple yet common and rich suite of data access and preparation operators that can be integrated in a machine learning pipeline.

\vspace{4pt}
\noindent
\textbf{Python Interfaces.}
A Python interface for accessing RDF knowledge graphs is provided by
Google's Data Commons project.\footnote{\url{http://datacommons.org}}
However, the goal of that project is not to provide powerful data access, but rather to synthesize a single graph from multiple knowledge graphs, and to enable browsing for graph exploration.
The provided Python interface has only one data access primitive: following an edge in the graph in either direction, which is but one of many capabilities provided by \rdfframes.


The Magellan project~\cite{magellan} provides a set of interoperable Python tools for entity matching pipelines. It is another example of developing data management solutions by extending the PyData ecosystem~\cite{anhai_hilda}, albeit in a very different domain from \rdfframes. The same factors that made Magellan successful in the world of entity matching can make \rdfframes successful in the world of knowledge graphs. 

There are multiple recent Python libraries that provide access to knowledge graphs through a \sparql endpoint over HTTP.
Examples include
pysparql,\footnote{\url{https://code.google.com/archive/p/pysparql}}
sparql-client,\footnote{\url{https://pypi.org/project/sparql-client}}
and 
AllegroGraph Python client.\footnote{\url{https://franz.com/agraph/support/documentation/current/python}}
However, all these libraries solve a very different (and simpler) problem compared to \rdfframes: they take a \sparql query written by the user and handle sending this query to the endpoint and receiving results. On the other hand, the main focus of \rdfframes is \textit{generating} the \sparql query from imperative API calls. Communicating with the endpoint is also handled by \rdfframes, but it is not the primary contribution.

\vspace{4pt}
\noindent
\textbf{Internals of \rdfframes.}
The internal workings of \rdfframes involve a logical representation of a query.
Query optimizers use some form of logical query representation, and we adopt a representation similar to the Query Graph Model~\cite{Pirahesh:1992}.
Another \rdfframes task is to generate \sparql queries from a logical representation.
This task is also performed by systems for federated \sparql query processing (e.g.,~\cite{fedx}) when they send a query to a remote site.
However, the focus in these systems is on answering \sparql triple patterns at different sites, so the queries that they generate are simple.
\rdfframes requires more complex queries so it cannot use federated \sparql  techniques.

\section{\rdfframes API}
\label{sec:data-model}

This section presents an overview of the \rdfframes API. \rdfframes provides the user with a set of operators, where each operator is implemented as a function in a programming language. 
Currently, this API is implemented in Python, but we describe the \rdfframes operators in generic terms since they
can be implemented 
in any programming language.
The goal of \rdfframes is to build a table (the dataframe) from a subset of information extracted from a knowledge graph. We start by describing the data model for a table constructed by \rdfframes, and then present an overview of 
the API operators.

\subsection{Data Model}
\label{subsec:data-model}

The main tabular data structure in \rdfframes
is called an \emph{\rdfframe}. 
This is the data structure constructed by API calls (\rdfframes operators). 
\rdfframes provides initialization operators that a user calls to initialize an \rdfframe and
other operators that extend or modify it.
Thus, an \rdfframe represents the data described by 
a sequence of one or more \rdfframes operators.
Since \rdfframes operators are not executed on relational tables but are mapped to \sparql graph patterns,
an \rdfframe is not represented as an actual
table in memory but rather as an abstract description of a table. 
A formal definition of a knowledge graph and an 
\rdfframe 
is as follows:

\begin{definition}[Knowledge Graph]
	A knowledge graph $G : (V, E)$ is a directed labeled RDF graph where the set of nodes $V \in I\cup L \cup B$ is a set of RDF URIs $I$, literals $L$, and blank nodes $B$ existing in $G$, and the set of labeled edges $E$ is a set of ordered pairs of elements of $V$ having labels from $I$.
	Two nodes connected by a labeled edge form a {\em triple} denoting the relationship between the two nodes. The knowledge graph is represented in \rdfframes by a 
	\textit{graph\_uri}.
\end{definition}

\begin{definition}[\rdfframe]
	Let $\mathbb{R}$ be the set of real numbers, $N$ be an infinite set of strings, and $V$ be the set of RDF URIs and literals.
	An \rdfframe $D$ is a pair $(\mathcal{C}, \mathcal{R})$, where $\mathcal{C} \subseteq N$ is an ordered set of column names of size $m$ and $\mathcal{R}$ is a bag of $m$-sized tuples with values from $V \cup \mathbb{R}$ denoting the rows. The size of
	$D$ is equal to the size of $\mathcal{R}$.
\end{definition}	

Intuitively, an \rdfframe is a subset of information extracted from one or more knowledge graphs. 
The rows of an \rdfframe should contain values that are either (a)~URIs or literals in a knowledge graph, or (b)~aggregated values on data extracted from a graph. 
Due to the bag semantics, an \rdfframe may contain duplicate rows,
which is good in machine learning because it preserves the data distribution and 
is compatible with the bag semantics of \sparql.

\subsection{API Operators}
\label{subsec:operators}

\rdfframes provides the user with two 
types of operators: (a)~exploration and navigational operators, which operate on a knowledge graph, and (b)~relational operators, which operate on an \rdfframe (or two in case of joins).
The full list of operators, and also other \rdfframes functions (e.g., for client-server communication), can be found with the source code.\footnote{\url{https://github.com/qcri/rdfframes}}

The \rdfframes exploration operators are needed to deal with one of the challenges of real-world knowledge graphs:
knowledge graphs in RDF are typically multi-topic, heterogeneous, incomplete, and sparse, and the data distributions can be highly skewed.
Identifying a relatively small, topic-focused dataset from such a knowledge graph
to extract into an \rdfframe is not a simple task, since it requires knowing the structure and schema of the dataset. \rdfframes provides data exploration operators to help with this task.
For example, \rdfframes includes operators to identify the RDF classes representing entity types
in a knowledge graph,
and to compute the data distributions of these classes.

Guided by the initial exploration of the graph, the user can gradually build an \rdfframe representing the
information to be extracted.
The first step
is always a call to 
the $seed$ operator (described below)
that initializes the \rdfframe with columns from the knowledge graph.
The rest of the \rdfframe is built through a sequence of calls to the \rdfframes navigational and relational operators.
Each of these operators outputs an \rdfframe.
The inputs to an operator can be a knowledge graph, one or more {\rdfframe}s, and/or other information such as predicates or column names.

The \rdfframes navigational operators are used to extract information from a knowledge graph into a tabular form using a navigational, procedural interface.
\rdfframes also provides relational operators that
apply operations on an \rdfframe such as filtering,
grouping, aggregation, filtering based on aggregate values, sorting, and join.
These operators do not access the knowledge graph, and one could argue that they are not necessary in \rdfframes since they are already provided by machine learning tools that work on dataframes such as pandas.
However, 
we opt to provide these operators
in \rdfframes 
so that they can be pushed into the RDF engine, which results in substantial performance gains as we will see in Section~\ref{sec:performance}.

In the following, we describe the syntax and semantics of the main operators of both types.
Without loss of generality, let $G=(V, E)$ be the input knowledge graph and $D=(\mathcal{C}, \mathcal{R})$ be the input \rdfframe of size $n$.
Let $D'=(\mathcal{C'}, \mathcal{R'})$ be the output \rdfframe.
In addition, let $\Join$, $\leftouterjoin$, $\rightouterjoin$, $\fullouterjoin$, $\sigma$, $\pi$, $\rho$, and $\gamma$ be the inner join, left outer join, right outer join, full outer join, selection, projection, renaming, and grouping-with-aggregation relational operators, respectively, defined using bag semantics as in typical relational databases~\cite{dbcomplete}.

\noindent{\textbf{Exploration and Navigational Operators.}}
These operators traverse a knowledge graph to extract information from it to either construct a new \rdfframe or expand an existing one. 
They bridge the gap between the RDF data model and the tabular format by allowing the user to extract tabular data through graph navigation.
They take as input either a knowledge graph $G$, or a knowledge graph $G$ and an \rdfframe $D$, and output an \rdfframe $D'$. 
\squishlist
\begin{sloppypar}
    \item $G.seed(col_1, col_2, col_3)$ where $col_1, col_2, col_3$ are in $N \cup V$: This operator is the starting point for constructing any \rdfframe. 
    Let $t = (col_1, col_2, col_3)$ be a \sparql triple pattern, then this operator creates an initial \rdfframe by converting the evaluation of the  triple pattern $t$ on graph $G$ to an \rdfframe. The returned \rdfframe has a column for every variable in the pattern $t$.
    Formally, let $D_t$ be the \rdfframe equivalent to the evaluation of the triple pattern $t$ on graph $G$. We formally define this notion of equivalence in Section~\ref{sec:correctness}.
	 The returned \rdfframe is defined as $D' = \pi_{N \cap \{col_1, col_2, col_3\}}(D_t)$.
	 As an example, the $seed$ operator can be used to retrieve all instances of class type $T$
	 in graph $G$ by calling $G.seed(instance, \text{\textit{rdf:type}}, T)$. 
	 For convenience, \rdfframes provides implementations for 
   the most common variants of this operator.
	 For example, the \texttt{feature\_domain\_range} operator in Listing~\ref{lst:motivating_code} initializes the \rdfframe with all pairs of entities in DBpedia connected by the predicate \texttt{dbpp:starring},
   which are movies and
	 the
	 actors starring in them.

  \item $(G, D).expand(col, pred, new\_col, dir, is\_opt)$, where $col \in \mathcal{C}$, $pred \in V$, $new\_col \in N$, $dir \in \{in, out\}$, and $is\_opt \in \{true, false\}$: 
	This is the main navigational operator in \rdfframes. It expands an \rdfframe by navigating from $col$ following the edge $pred$ to $new\_col$ in direction $dir$. 
	Depending on the direction of navigation, either the starting column for navigation $col$ is the subject of the triple and the ending column $new\_col$ is the object, or vice versa.
	$is\_opt$ determines whether null values are allowed. If $is\_opt$ is false, $expand$ filters out the rows in $D$ that have a null value in $new\_col$. Formally, if $t$ is a \sparql pattern representing the navigation step,
	then $t = (col, pred, new\_col)$ if direction is $out$ or $t = (new\_col, pred, col)$ if direction is $in$. 
	Let $D_{t}$ be the \rdfframe 
  corresponding to the evaluation of the triple pattern $t$ on graph G. $D_{t}$ will contain one column $new\_col$ and the rows are the objects of $t$ if the direction is $in$ or the subjects if the direction is $out$. 
  Then $D' = D \Join D_{t}$ if $is\_opt$ is false or $D' = D \leftouterjoin D_{t}$ if $is\_opt$ is true.
  For example, in Listing~\ref{lst:motivating_code}, \texttt{expand} is used twice,
  once to add the \texttt{country} attribute of the actor to the \rdfframe and once to find the movies and (if available) Academy Awards for prolific American actors.
\end{sloppypar}

\squishend

\noindent{\textbf{Relational Operators.}} These operators are used to clean and further process {\rdfframe}s.
They have the same semantics as in relational databases.
They take as input one or two {\rdfframe}s and output an \rdfframe.

\squishlist
	\item $D.filter(conds = [cond_1 \land cond_2 \land \ldots \land cond_k])$, where $conds$ is a list of expressions of the form $(col$ $\{<, >, =, \ldots\}$ $val)$ or one of the pre-defined boolean functions found in \sparql like $isURI(col)$ or $isLiteral(col)$: This operator filters out rows from an \rdfframe that do not conform to $conds$.
	Formally, let $\varphi = [cond_1 \land cond_2 \land \ldots \land cond_k$] be a propositional formula where $cond_i$ is an expression. 
	Then $D' = \sigma_{\varphi}(D)$.
	In Listing~\ref{lst:motivating_code}, \texttt{filter} is used two times, once to restrict the extracted data to American actors and once to restrict the results of a group by in order to identify prolific actors
  (defined as having 50 or more movies).
  The latter \texttt{filter} operator is applied
  after \texttt{group\_by} and the aggregation function \texttt{count}, which corresponds to a very different \sparql pattern compared to the first usage. However, this is handled internally by \rdfframes and is transparent to the user.

	\item $D.select\_cols(cols)$, where $cols \subseteq \mathcal{C}$: Similar to the relational projection operation, it keeps only the columns $cols$ and removes the rest. Formally, $D' = \pi_{cols}(D)$.

\begin{sloppypar}
	\item $D.join(D_2, col, col_2, jtype, new\_col)$, where $D_2 = (\mathcal{C}_2, \mathcal{R}_2)$ is another \rdfframe, $col \in \mathcal{C}$, $col_2 \in \mathcal{C}_2$, and $jtype \in \{\Join, \leftouterjoin, \rightouterjoin, \fullouterjoin\}$:
	This operator joins two \rdfframe tables on their columns $col$ and $col_2$ using the join type $jtype$. $new\_col$ is the desired name of the new joined column.
	Formally, $D' = \rho_{new\_col/col} (D)\ jtype\ \rho_{new\_col/col_2}(D_2)$.


	\item $D.group\_by(group\_cols).aggregation(fn, col, new\_col)$, where $group\_cols \subseteq \mathcal{C}$, $fn \in \{max, min, average, sum, count, sample\}$, $col \in \mathcal{C} $ and $new\_col \in N$: This operator groups the rows of $D$ according to their values in one or more columns $group\_cols$. As in the relational grouping and aggregation operation, it partitions the rows of an \rdfframe into 
  groups
  and then applies the aggregation function on the values of column $col$ within each group. It returns a new \rdfframe which contains the grouping columns and the result of the aggregation on each group, i.e., $\mathcal{C'} = group\_cols \cup \{new\_col\}$. The combinations of values of the grouping columns in $D'$ are unique. Formally,
	$D' = \gamma_{group\_cols, fn(col) \mapsto new\_col}(D)$.
	Note that query generation has special handling for {\rdfframe}s output by the $group\_by$ operator (termed \texttt{grouped {\rdfframe}s}).
	This special handling is internal to  \rdfframes and transparent to the user.
  In Listing~\ref{lst:motivating_code}, \texttt{group\_by} is used with the \texttt{count} function to find the number of movies in which each actor appears.
	\item $D.aggregate(fn, col, new\_col)$, where $col\in\mathcal{C}$ and $fn\in\{max, min, average, sum, count, distinct\_count\}$: This operator aggregates values of the column $col$ and returns an \rdfframe that has one column and one row containing the aggregated value. It has the same formal semantics
  as the $D.group\_by().aggregation()$ operator except that $group\_cols = \emptyset$, so the whole \rdfframe is assumed to be one group.
	No further processing can be done on the \rdfframe after this operator.
\end{sloppypar}

  \item $D.sort(cols\_order)$, where $cols\_order$  
  is a set of pairs $(col, order)$ with $col \in \mathcal{C}$ and $order \in \{asc, desc\}$: This operator sorts the rows of the \rdfframe according to their values in the given columns and their sorting order and returns
  a sorted \rdfframe.

	\item $D.head(k, i)$, where $k \leq n$: Returns the first $k$ rows of the \rdfframe starting from row $i$ (by default $i=0$).
	No further processing can be done on the \rdfframe after this operator.

\squishend

\vspace*{-8pt}
\section{Query Generation}
\label{sec:query-generation}

One of the key innovations in \rdfframes is the query generation process.
Query generation produces a \sparql query from an \rdfframe representing a sequence of calls to \rdfframes operators.
The guidelines we use in query generation to guarantee efficient processing are as follows:

\squishlist
\item
Include all of the computation required for generating an \rdfframe in the \sparql query sent to the RDF engine.
Pushing computation into the engine enables \rdfframes to take advantage of the benefits of a database system such as query optimization, bulk data processing, and near-data computing.

\item
Generate one \sparql query for each \rdfframe, never more. \rdfframes combines all graph patterns and operations described by an \rdfframe 
into a single \sparql query.
This minimizes the number of interactions with the RDF engine and enables the query optimizer to explore all optimization opportunities since it can see all operations.

\item
Ensure that the generated query is as simple as possible.
The query generation algorithm generates graph patterns that minimize the use of nested subqueries
and union 
\sparql patterns, since these are known to be expensive.
Note that, in principle, we are doing part of the job of the RDF engine's query optimizer.
A powerful-enough optimizer would be able to simplify and unnest queries whenever possible.
However, the reality is that \sparql is a complex language on which query optimizers do not always do a good job. As such, any steps to help the query optimizer are of great use. We show the performance benefit of this approach in Section~\ref{sec:performance}.

\item
Adopt a lazy execution model, generating and processing a query only when required by the user.

\item
Ensure that the generated \sparql query is correct, that is, ensure the query is semantically equivalent to the \rdfframe. We prove this in Section~\ref{sec:correctness}.

\squishend

\begin{figure}
  \centering
  \includegraphics[width=\columnwidth]{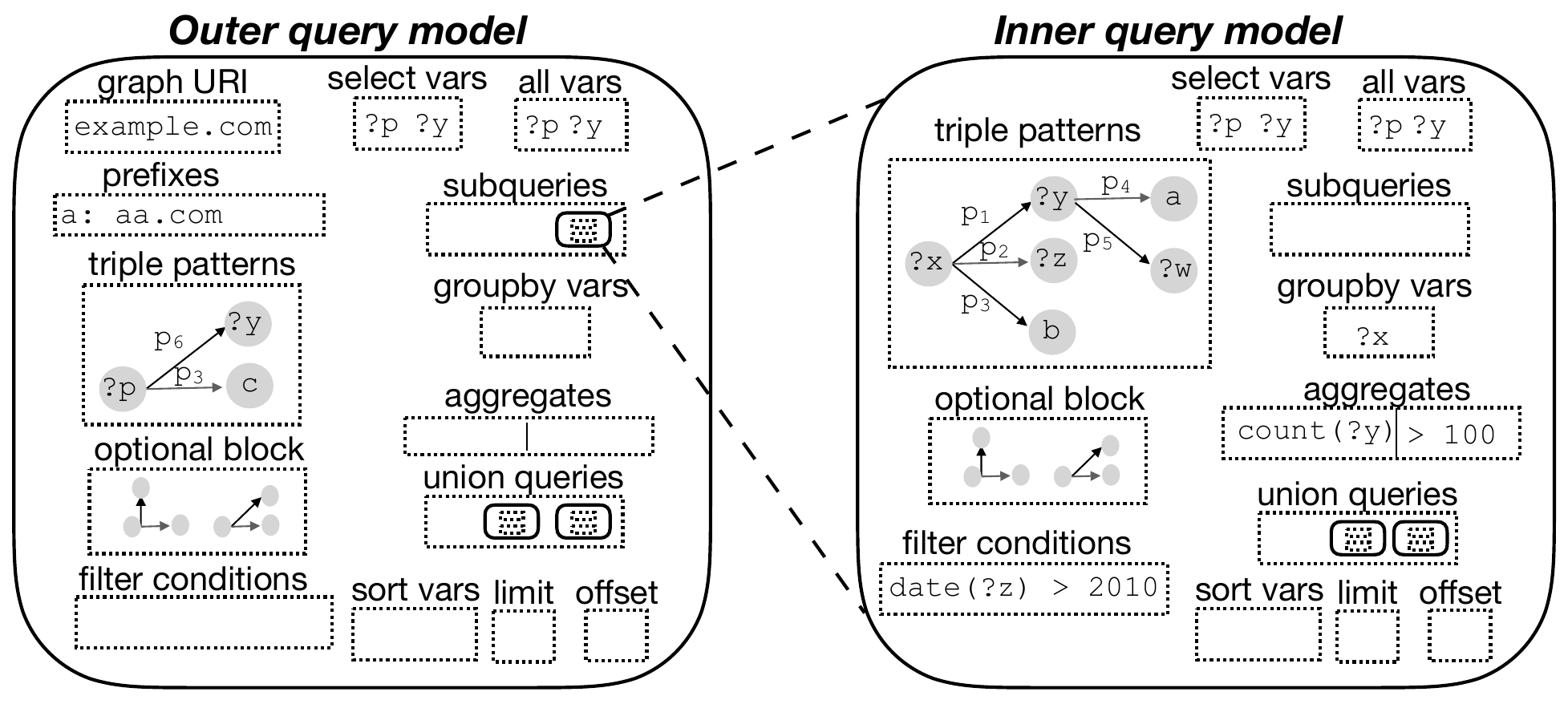}
  \caption{Example of an \rdfframes nested query model.}
  \label{fig:querymodel}
  \vspace*{-8pt}
\end{figure}


\vspace*{-3pt}
Our query model is inspired by the Query Graph Model~\cite{Pirahesh:1992}, and it
encapsulates all components required to construct a \sparql query. 
Query models can be nested in cases where nested subqueries are required.
Using the query model as an intermediate representation between an \rdfframe and the corresponding \sparql query allows for 
(i)~flexible implementation by separating the operator manipulation logic from the query generation logic, and (ii)~simpler optimization.
Without a query model, a naive implementation of \rdfframes would 
translate each operator to a \sparql pattern and encapsulate it in a subquery,
with one outer query joining all the subqueries to produce the result.
This is analogous to how some software-generated SQL queries are produced.
Other implementations are possible such as producing a \sparql query for
each operator and re-parsing it every time it has to be combined with a new pattern,
or directly manipulating the parse tree of the query.
The query model enables a simpler and more powerful implementation.

An example query model representing a nested \sparql query is shown in Figure~\ref{fig:querymodel}.
The left part of the figure is the outer query model, which has a reference to the inner query model (right part of the figure). 
The figure shows the components of a \sparql query represented in a query model.
These are as follows:
\squishlist

\item
Graph matching patterns including triple patterns, filter conditions, pointers to inner query models for sub-queries, optional blocks, and union patterns. Graph pattern matching is a basic operation in \sparql.
A \sparql query can be formed by combining triple patterns in various ways using different keywords.
The default is that a solution is produced if and only if
every triple pattern that appears in a graph pattern is matched to the triples in the RDF graph.
The OPTIONAL keyword adds triple patterns that extend the solution if they are matched, but do not eliminate the solution if they are not matched.
That is, OPTIONAL creates left outer join semantics.
The FILTER keyword adds a condition and restricts the query results to solutions that satisfy this condition.

\item
Aggregation constructs including: group-by columns, aggregation columns, and filters on aggregations (which result in a \texttt{HAVING} clause in the \sparql query). These patterns are applied to the result \rdfframe generated so far. Unlike graph matching patterns, they are not matched to the RDF graph.Aggregation constructs in inner query models are not propagated to outer query models.

\item
Query modifiers including limit, offset and sorting columns.
These constructs make final modifications to the result of the query. Any further API calls after adding these modifiers will result in a nested query as the current query model is wrapped and added to another query model. 

\item
The graph URIs by the query, the prefixes used, and the variables in the scope of each query.

\squishend

\vspace*{-9pt}
\subsection{Query Model Generation}

\vspace*{-2pt}

The query model is generated lazily, when the special \texttt{execute} function is called
on an \rdfframe.
We observe that generating the query model requires capturing the order of calls to \rdfframes operators and the parameters of these calls, but nothing more.
Thus, each \rdfframe $D$ created by the user is associated with a FIFO queue of operators.
The \texttt{Recorder} component of \rdfframes (recall Figure~\ref{fig:architecture})
records in this queue the sequence of operator calls made by the user.
When \texttt{execute} is called, the \texttt{Generator} component of \rdfframes creates the query model incrementally by processing the operators in this queue in FIFO order.
\rdfframes starts with an empty query model $m$. For each operator pulled from the queue of $D$, its corresponding \sparql component is inserted into $m$. 
Each \rdfframes operator edits one or two components of $m$.
\textit{All of the optimizations to generate efficient \sparql queries are done during query model generation.}

The first operator to be processed is always a $seed$ operator for which \rdfframes adds the corresponding triple pattern to the query model $m$. 
To process an $expand$ operator, it adds the corresponding triple pattern(s) to $m$. For example, the operator $expand(x, pred, y, \text{out}, \text{false})$ will result in the triple pattern $(?x, pred, ?y)$ being added to the triple patterns of $m$. Similarly, processing the $filter$ operator adds the conditions that are input parameters of this operator to the filter conditions in $m$.
To generate succinct optimized queries,  \rdfframes adds all triple and filter patterns to the same query model $m$, as long as the semantics are preserved. 
As a special case, when $filter$ is called on an aggregated column, the \texttt{Generator} adds the filtering condition to the $having$ component of $m$. 

One of the main challenges 
in designing \rdfframes was identifying the cases where a nested \sparql query is necessary. We were able to limit this to 
three cases where a nested query is needed to maintain the semantics:
\squishlist
    \item Case 1: when an $expand$ or $filter$ operator has to be applied on a grouped \rdfframe. The semantics here can be thought of as creating an \rdfframe that satisfies the expand or filter pattern and then joining it with the grouped \rdfframe. For example, the \rdfframes code in Listing~\ref{lst:case1_code} expands the \textit{country} column to obtain the \textit{continent} after the \texttt{group\_by} and \texttt{count}. This is semantically equivalent to building an \rdfframe of countries and their continents and then performing an inner join with the grouped \rdfframe.
    \vspace{8pt}
\begin{lstlisting}[
  aboveskip=-0.0\baselineskip,
  belowskip=-0.0\baselineskip,
  language=Python,
  breaklines=true,
  showspaces=false,
  basicstyle=\ttfamily\scriptsize,
  commentstyle=\color{gray},
  otherkeywords={expand, filter, group_by, join, feature_domain_range,entities,  .count, cache, unique},
  keywordstyle=\color{blue},
  caption={\rdfframes code - Expanding a grouped \rdfframe.},
  captionpos=b,
  label={lst:case1_code}]
df = graph.entities(':dpo:Actor', 'actor')\
  .expand('actor', [('dbp:birthPlace', 'country')])\
  .group_by(['actor'])\
  .count('country', 'country_count')\
  .expand('country', [('dbo:continent', 'continent'])
\end{lstlisting}

    \item Case 2: When a grouped \rdfframe has to be joined with another \rdfframe (grouped or non-grouped).
    For example, Listing~\ref{lst:case2_code} represents a join between a grouped \rdfframe and another \rdfframe.
\vspace{8pt}
\begin{lstlisting}[
  aboveskip=-0.0\baselineskip,
  belowskip=-0.0\baselineskip,
  language=Python,
  breaklines=true,
  showspaces=false,
  basicstyle=\ttfamily\scriptsize,
  commentstyle=\color{gray},
  otherkeywords={expand, filter, group_by, join, feature_domain_range,entities, count, cache, unique},
  keywordstyle=\color{blue},
  caption={\rdfframes code - Joining a grouped \rdfframe with another \rdfframe.},
  captionpos=b,
  label={lst:case2_code}]
df1 = graph.entities('dbo:Actor', 'actor')\
  .expand('actor', [('dbp:birthPlace', 'country')])\
  .group_by(['actor']).count('country', 'country_count')
df2 = graph.feature_domain_range('dbp:starring',
  'movie', 'actor').join(df1, 'actor', InnerJoin)
\end{lstlisting}
    \item Case 3: When two datasets are joined by a full outer join.
    For example, the \rdfframes code in Listing~\ref{lst:case3_code}
    is a full outer join between two datasests.
\vspace{8pt}
\begin{lstlisting}[
  aboveskip=-0.0\baselineskip,
  belowskip=-0.0\baselineskip,
  language=Python,
  breaklines=true,
  showspaces=false,
  basicstyle=\ttfamily\scriptsize,
  commentstyle=\color{gray},
  otherkeywords={expand, filter, group_by, join, feature_domain_range,entities, count, cache, unique},
  keywordstyle=\color{blue},
  captionpos=b,
  caption={\rdfframes code - Full outer join.},
  label={lst:case3_code}]
df1 = graph.entities('dpo:Actor', 'actor')\
  .expand('actor', [('dbp:birthPlace', 'country')])
df2 = graph.feature_domain_range('dbp:starring',
  'movie', 'actor').join(df1, 'actor', OuterJoin)
\end{lstlisting}

    There is no explicit full outer join between patterns in \sparql, only left outer join using the OPTIONAL pattern. Therefore, we define full outer join using the UNION and OPTIONAL patterns as the union of the left outer join and the right outer join of $D_1$ and $D_2$.
    A nesting query is required to wrap the query model for each \rdfframe inside the final query model.
\squishend

In the first case, when an $expand$ operation is called on a grouped \rdfframe, \rdfframes has to wrap the grouped \rdfframe in a nested subquery to ensure the evaluation of the grouping and aggregation operations before the expansion. 
\rdfframes uses the following steps to generate the subquery: (i)~create an empty query model $m'$, (ii)~transform the query model built so far $m$ into a subquery of $m'$, and~(iii)~add the new triple pattern from the $expand$ operator to the triple patterns of $m'$. In this case, $m'$ is the outer query model after the $expand$ operator and the grouped \rdfframe is represented by the inner query model $m$. Similarly, when $filter$ is applied on a grouping column in a grouped \rdfframe, \rdfframes creates a nested query model by transforming $m$ into a subquery. This is necessary since the filter operation was called after the aggregation and, thus, has to be done after the aggregation to maintain the correctness of the aggregated values.

The second case in which a nested subquery is required is when joining a grouped \rdfframe with another \rdfframe.
In the following, we describe in full the different cases of processing the $join$ operator, including the cases when subqueries are required.

To process the binary $join$ operator, \rdfframes needs to join two query models of two different {\rdfframe}s $D_1$ and $D_2$. If the join type is full outer join, a complex query that is equivalent to the full outer join
is constructed using the \sparql
OPTIONAL ($\leftouterjoin$) and UNION ($\cup$) patterns. 
Formally, $D_1 \fullouterjoin D_2 = (D_1 \leftouterjoin D_2)\ \cup \rho_{reorder} (D_2   \leftouterjoin D_1)$.

To process a full outer join, two new query models are constructed: 
The first query model ${m_1}{'}$ contains the left outer join of the query models $m_1$ and $m_2$, which represent $D_1$ and $D_2$, respectively.
The second query model ${m_2}{'}$ contains the right outer join of the of the query models $m_1$ and $m_2$, which is equivalent to the left outer join of $m_2$ and $m_1$. The columns of ${m_2}{'}$ are reordered to make them union compatible with ${m_1}{'}$. Nested queries are necessary to wrap the two query models $m_1$ and $m_2$ inside ${m_1}{'}$ and ${m_2}{'}$.
One final outer query model unions the two new query models ${m_1}{'}$ and ${m_2}{'}$.

For other join types, we distinguish three cases:
\squishlist
	\item $D_1$ and $D_2$ are not grouped: \rdfframes merges the two query models into one by combining their graph patterns (e.g., triple patterns and filter conditions). If the join type is left outer join, the patterns of $D_2$ are added inside a single OPTIONAL block of $D_1$. Conversely, for a right outer join the $D_1$ patterns are added as OPTIONAL in $D_2$. No nested query is generated here.
	\item $D_1$ is grouped and $D_2$ is not:
  \rdfframes merges the two query models via nesting. The query model of $D_1$ is the inner query model, while $D_2$ is set as the outer query model.  If the join type is left outer join, $D_2$ patterns are wrapped inside a single OPTIONAL block of $D_1$, and if the join type is right outer join, the subquery model generated for $D_1$ is wrapped in an OPTIONAL block in $D_2$. This is an example of the second case in which nested queries are necessary. The case when $D_2$ is grouped and $D_1$ is not is analogous to this case.
	\item Both $D_1$ and $D_2$ are grouped: \rdfframes creates one query model containing two nested query models, one for each \rdfframe, another example of the second case where nested queries are necessary. 
\squishend

If $D_1$ and $D_2$ are constructed from different graphs, the original graph URIs are used in the inner query to map each pattern to the graph it is supposed to match.

To process other operators such as $select\_cols$ and $group\_by$, \rdfframes fills the corresponding component in the query model. The $head$ operator maps to the $limit$ and
\textit{offset} components of the query model $m$. To finalize the join processing, \rdfframes unions the selection variables of the two query models, and takes the minimum of the offsets and the maximum of the limits (in case both query models have an offset and a limit).

\vspace*{-9pt}
\subsection{Translating to \sparql}
\begin{sloppypar}
The query model is designed to make translation to \sparql as direct and simple as possible. \rdfframes traverses a query model and translates each component of the model directly to the corresponding \sparql construct, following the syntax and style guidelines of \sparql.
For example, each prefix is translated to \texttt{PREFIX name\_space:name\_space\_uri}, graph URIs are added to the \texttt{FROM} clause, and each triple and filter pattern is added to the \texttt{WHERE} clause. The inner query models are translated recursively to \sparql queries and added to the outer query using the subquery syntax defined by \sparql.
When the query accesses more than one graph and different subsets of graph patterns are matched to different graphs, the \texttt{GRAPH} construct is used to wrap each subset of graph patterns with the matching graph URI. 
\end{sloppypar}

The generated \sparql query is sent to the RDF engine or \sparql endpoint using 
the \sparql protocol\footnote{\url{https://www.w3.org/TR/sparql11-protocol}} over HTTP.
We choose communication over HTTP since it is the most general mechanism to communicate with RDF engines and the only mechanism to communicate with \sparql endpoints.
One issue we need to address is paginating the results of a query, that is, retrieving them in chunks. There are several good reasons to paginate results, for example, avoiding timeouts at \sparql endpoints and bounding the amount of memory used for result buffering at the client. When using HTTP communication, we cannot rely on RDF engine cursors to do the pagination as they are engine-specific and not supported by the \sparql protocol over HTTP. The HTTP response returns only the first chunk of the result and the size of the chunk is limited by the \sparql endpoint configuration. The \sparql over HTTP client has to ask for the rest of the result chunk by chunk but this functionality is not implemented by many existing clients. 
Since our goal is generality and flexibility, \rdfframes implements pagination 
transparently to the user and returns one dataframe with all the query results.

\section{Semantic Correctness of Query Generation}
\label{sec:correctness}

In this section, we formally prove that the \sparql queries generated by \rdfframes return results that are consistent with the semantics of the \rdfframes operators.
We start with an overview of RDF and the \sparql algebra to establish the required notation.
We then summarize the semantics of \sparql, which is necessary for our correctness proof. Finally, we formally describe the query generation algorithm in \rdfframes and prove its correctness.

\subsection{\sparql Algebra}

The RDF data model can be defined as follows.
Assume there are countably infinite pairwise disjoint
sets $I$, $B$, and $L$ representing URIs, blank nodes, and literals, respectively. Let $T = (I \cup B \cup L)$ be the set of RDF terms. The basic component of an RDF graph is an RDF triple $(s,p,o) \in (I \cup B) \times I \times T $ where $s$ is the $subject$, $o$ is the $object$, and $p$ is the $predicate$.  An \textit{RDF graph} is a finite set of RDF triples.
Each triple represents a fact describing a relationship of type $predicate$ between the $subject$ and the $object$ nodes in the graph.

\sparql is a graph-matching query language that evaluates patterns on graphs and returns a result set. Its algebra consists of two
building blocks: \textit{expressions} and \textit{patterns}.

Let $X = \{?x_1, ?x_2, \ldots, ?x_n\}$ be a set of variables disjoint from the RDF terms $T$, the \sparql syntactic blocks are defined over $T$ and $X$. For a pattern $P$, $Var(P)$ is the set of all variables in $P$.
Expressions and patterns in are defined recursively as follows:
\squishlist
    \item A triple $ t \in (I \cup L \cup X) \times (I \cup X) \times (I \cup L \cup X)$ is a pattern.
    \item If $P_1$ and $P_2$ are patterns, then $ P_1\ Join\ P_2$, $ P_1\ Union\ P_2$, and  $P1\ LeftJoin\ P2$ are patterns. 
    \item Let all variables in $X$ and all terms in $I \cup L$ be \sparql expressions; then $ (E1+E2)$, $(E1-E2)$, $(E1\times E2)$, $ (E1/E2)$, $(E1=E2)$, $(E1<E2)$, $(\neg E1)$, $(E1\land E2)$, and $(E1\lor E2)$ are expressions. If $P$ is a pattern and $E$ is an expression then $Filter (E, P)$ is a pattern.
    \begin{sloppypar}
    \item If $P$ is a pattern and $X$ is a set of variables in $Var(P)$, then $Project (X, P)$ and $Distinct(Project(X, P))$ are patterns. These two constructs allow nested queries in \sparql and by adding them, there is no meaningful distinction between \sparql patterns and queries. 
    \end{sloppypar}
    \item If $P$ is a pattern, $E$ is an expression and $?x$ is a variable not in $Var(P)$, then $Extend(?x, E, P)$ is a pattern. This allows assignment of expression values to new variables and is used for variable renaming in \rdfframes.
    \begin{sloppypar}    
    \item If $X$ is a set of variables, $?z$ is another variable, $f$ is an aggregation function, $E$ is an expression, and $P$ is a pattern, then $GroupAgg(X,?z,f,E,P)$ is a pattern where $X$ is the set of grouping variables, $?z$ is a fresh variable to store the aggregation result, $E$ is often a variable that we are aggregating on. This pattern captures the grouping and aggregation constructs in \sparql 1.1. It induces a partitioning of a pattern's solution mappings into equivalence classes based on the values of the grouping variables and finds one aggregate value for each class using one of the aggregation functions in 
    $\{max, min, average, sum, count, sample\}$.
    \end{sloppypar}

\squishend
 
 \sparql defines some modifiers for the result set returned by the evaluation of the patterns. These modifiers include: $Order(X,order)$ where $X$ is the set of variables to sort on and $order$ is 
$ascending$ or $descending$, 
$Limit(n)$ which returns the first $n$ values of the result set, and
\textit{Offset}$(k)$
which returns the results starting from the $k$-th value.

\vspace*{-12pt}
\subsection{\sparql Semantics}
\label{subsec:sparql}
In this section, we present the semantics defined in~\cite{kaminski2016semantics}, which assumes bag semantics and integrates all \sparql1.1 features such as aggregation and subqueries.




The semantics of \sparql queries are based on multisets (bags) of mappings. A \textbf{mapping} is a partial function $\mu$ from $X$ to $T$ where $X$ is a set of variables and $T$ is the set of RDF terms. 
The domain of a mapping $dom(\mu)$ is the set of variables where $\mu$ is defined. $\mu_1$ and $\mu_2$ are compatible mappings, written ($\mu_1 \thicksim \mu_2$), if  $(\forall ?x \in dom(\mu_1) \cap dom(\mu_2), \mu_1(?x) = \mu_2(?x))$.
If $\mu_1 \thicksim \mu_2$, $\mu_1 \cup \mu_2$ is also a mapping and is obtained by extending $\mu_1$ by $\mu_2$ mappings on all the variables $dom(\mu_2) \setminus dom(\mu_1)$. 

A \sparql pattern solution is a multiset $\Omega = (S_{\Omega}, card_{\Omega})$ where $S_{\Omega}$ is the base set of mappings, and the multiplicity function $card_{\Omega}$ assigns a positive number to each element of $S_{\Omega}$.

Let $\llbracket E \rrbracket_G$ denote the evaluation of expression $E$ on graph $G$, $\mu(P)$ the pattern obtained from $P$ by replacing its variables according to $\mu$,
and $Var(P)$ the set of all the variables
in $P$.
The semantics of patterns over graph $G$ are defined as:
\squishlist
    \item  $\llbracket t \rrbracket_G$: the solution of a triple pattern $t$ is the multiset with $S_t = $ all $\mu$ such that $dom(\mu) = Var(t)$ and $\mu(t) \in G$. $card_{\llbracket t \rrbracket_G}(\mu) = 1$ for all such $\mu$.
    \item $\llbracket P_1\ Join\ P_2 \rrbracket_G = \{\{
    \mu | \mu_1 \in \llbracket P1 \rrbracket_G, \mu_2 \in \llbracket P2 \rrbracket_G, \mu = \mu_1 \cup \mu_2
    \}\}$
    \item $\llbracket P_1\ LeftJoin\ P_2 \rrbracket_G = 
    \{\{
    \mu | \mu \in \llbracket P_1\ Join\ P_2 \rrbracket_G
    \}\} \uplus $\\
    $\{\{\mu | \mu \in \llbracket P1 \rrbracket_G, \forall \mu_2 \in \llbracket P2 \rrbracket_G, (\mu \nsim \mu_2)
    \}\}$
    \item $\llbracket P1\ Union\ P2 \rrbracket_G =  \llbracket P1 \rrbracket_G \uplus	 \llbracket P2 \rrbracket_G$
    \item $\llbracket Filter(E, P)\rrbracket_G = \{\{    \mu | \mu \in \llbracket P1 \rrbracket_G, \llbracket E \rrbracket_{\mu,G} =true \}\}$
    \item $\llbracket Project(X, P)\rrbracket_G = $   $\forall \mu \in \llbracket P\rrbracket_G$, if $\mu$ is a restriction to $X$ then it is in the base set of this pattern and its multiplicity is
    the
    sum of multiplicities of all corresponding $\mu$.
    \begin{sloppypar}
    \item $\llbracket Distinct(Q) \rrbracket_G = $  the multiset with the same base set as $\llbracket Q \rrbracket_G$, but with multiplicity 1 for all mappings.
The \sparql patterns $Project (X, P)$ and $Distinct(Project(X, P))$ define a \sparql query.
When used in the middle of a query, they define a nested query.
    \end{sloppypar}
    \begin{sloppypar}
    \item $\llbracket Extend(?x, E, P)\rrbracket_G =$\\ $\{\mu'| \mu \in \llbracket P\rrbracket_G, \mu'=\mu\cup\{?x\rightarrow \llbracket E\rrbracket_{\mu,G}\}, \llbracket E\rrbracket_{\mu,G} \neq Error \} \uplus \\ \{\mu| \mu \in \llbracket P\rrbracket_G, \llbracket E \rrbracket_{\mu,G} =  Error \} $ and $Var(Extend(?x, E, P)) = \{?x\} \cap Var(P)$
    \end{sloppypar}
    \item Given a graph $G$, let $v|x $ be the restriction of $v$ to $X$, then $\llbracket GroupAgg(X,?z,f,E,P)\rrbracket_G$ is the multiset with the base set:\\
    $\{\mu' | \mu' = \mu|X \cup \{?z \rightarrow v_{\mu}\}, \mu \in \llbracket P\rrbracket_G, v_{\mu} \neq Error  \} \cup \newline \{\mu' | \mu' = \mu|X , \mu \in \llbracket P \rrbracket_G, v_{\mu} = Error  \}$ and multiplicity 1 for each mapping in the base set, where for each mapping $\mu \in \llbracket P \rrbracket_G$, the value of the aggregation function on the group that the mapping belongs to is $v_{\mu} = f(\{v\ |\ \mu' \in \llbracket P \rrbracket_G, \mu'|x = \mu|x, v = \llbracket E \rrbracket_{\mu',G} \})$.
\squishend

\begin{table*}[th!]
    \caption{Mappings of \rdfframes operators on graph $G$ and/or
    \rdfframe $D$ to \sparql patterns on $G$. $P$ is the \sparql pattern equivalent to the sequence of \rdfframes operators called so far on $D$ (or null if $D$ is new).  \label{tab:mappings}}
    \begin{tabular}{ |c|c| } 
        \hline
        \textbf{\rdfframes Operator $\bm{O}$} & \textbf{SPARQL pattern: $\bm{g(O, P)}$} \\
        \hline
        $seed(col_1, col_2, col_3)$ & $Project(Var(t),t),$ where  $t = (col_1, col_2, col_3)$\\
        \hline
        $expand(x, pred, y, out, false)$ & $P \Join (?x, pred, ?y) $ \\
        $expand(x, pred, y, in, false)$ & $P \Join (?y, pred, ?x) $ \\ 
        $expand(x, pred, y, out, True)$ & $P \leftouterjoin (?x, pred, ?y) $\\
        $expand(x, pred, y, in, True)$ & $P \leftouterjoin (?y, pred, ?x) $\\
        \hline
        $join(D_2, col, col_2, \Join, new\_col)$ & $Extend(new\_col, col, P) \Join Extend(new\_col, col_2, P_2)$, $P_2 = F(O_{D_2})$\\
        $join(D_2, col, col_2, \leftouterjoin, new\_col)$ & $Extend(new\_col, col, P) \leftouterjoin Extend(new\_col, col_2, P_2) $, $P_2 = F(O_{D_2})$\\
        $join(D_2, col, col_2, \rightouterjoin, new\_col)$ & $Extend(new\_col, col_2, P_2) \leftouterjoin Extend(new\_col, col, P) $, $P_2 = F(O_{D_2})$\\
        $join(D_2, col, col_2, \fullouterjoin, new\_col)$ & $(P1 \leftouterjoin P_2) \cup (P_2 \leftouterjoin P_1)$, \\
        & $P_1 = Extend(new\_col, col, P)$, $P_2 = Extend(new\_col, col_2, F(O_{D_2})) $\\
               \hline
        $filter(conds = [cond_1 \land cond_2 \land \dots \land cond_k])$ & $Filter(conds, P)$\\
        \hline
        $select\_cols(cols)$ & $Project(cols, P)$\\
        \hline
        $groupby(group\_cols).$ & $Project(group\_cols \cup \{new\_col\}, $\\ 
        $aggregation(fn, src\_col, new\_col)$& $GroupAgg(group\_cols, new\_col, fn, src\_col, P))$ \\
        \hline
        $aggregate(fn, col, new\_col)$ & $Project(\{ new\_col \}, GroupAgg(\emptyset, new\_col, fn, col, P))$ \\
        \hline
    \end{tabular}
\end{table*}

\vspace*{-4pt}
\subsection{Semantic Correctness}
\label{subsec:correctness}

Having defined the semantics of \sparql patterns, we now prove the semantic correctness of query generation in \rdfframes as follows.
First, we formally define the \sparql query generation algorithm. That is, we define the \sparql query or pattern generated by any sequence of \rdfframes operators. We then  prove that the solution sets of the generated \sparql patterns are equivalent to the \rdfframes tables defined by the semantics of the sequence of \rdfframes operators.


\subsubsection{Query Generation Algorithm}
To formally define the query generation algorithm, we first define the \sparql pattern each \rdfframes operator generates. 
We then give a recursive definition of a non-empty RDFFrame and then define a recursive mapping from any sequence of \rdfframes operators constructed by the user to a \sparql pattern using the patterns generated by each operator. This mapping is based on the query model 
described in Section~\ref{sec:query-generation}.

\vspace*{-8pt}
\begin{definition}[Non-empty RDFFrame]
\label{def:recursivedataset}
	A non-empty RDFFrame is either generated by the $seed$ operator or by applying an \rdfframes operator on one or two non-empty RDFFrames. 
\end{definition}

\vspace*{-4pt}
Given a non-empty RDFFrame $D$, let $O_D$ be the sequence of \rdfframes operators that generated it. 

\vspace*{-4pt}
\begin{sloppypar}
\begin{definition}[Operators to Patterns]
	Let $O = [o_1, \dots, o_k]$ be a sequence of \rdfframes operators and $P$ be a \sparql pattern.
	Also let $g: (o, P) \rightarrow P$ be the mapping from a single \rdfframes operator $o$ to a \sparql pattern based on the query generation of \rdfframes described in Section~\ref{sec:query-generation}, also illustrated in Table~\ref{tab:mappings}.
    Mapping $g$ takes as input an \rdfframes operator $o$ and a \sparql pattern $P$ corresponding to the operators done so far on an \rdfframe $D$, applies a \sparql operator defined by the query model generation algorithm on the input \sparql pattern $P$, and returns a new \sparql pattern.
	Using $g$, we define a recursive mapping $F$ on a sequence of \rdfframes operators $O$, $F: O \rightarrow P$, as: 
    \begin{equation}
      F(O)=\begin{cases}
        g(o_1, Null), & \text{if $|O| \leq 1$}.\\
        g(o_k, F(O_{[1:k-1]}), F(O_{D_2})), & {o_k = join(D_2,\ldots)}.\\
        g(o_k, F(O_{[1:k-1]})), & \text{otherwise}.
      \end{cases}
    \end{equation}
\end{definition}
\end{sloppypar}
$F$ returns a triple pattern for the seed operator and then builds the rest of the \sparql query by iterating over the \rdfframes operators according to their order in the sequence $O$.

\subsection{Proof of Correctness}
\vspace*{-3pt}
To prove the equivalence between the \sparql pattern solution returned by $F$ and the \rdfframe generating it, 
we first define the meaning of equivalence between a relational table with bag semantics and the solution sets 
of
\sparql queries.
First, we define a mapping that converts \sparql solution sets to relational tables by letting the domains of the mappings be the columns and their ranges be the rows.
Next, we define the equivalence between solution sets and relations.

\vspace*{-4pt}
\begin{sloppypar}
\begin{definition}[Solution Sets to Relations]
\label{def:solution2relation}
Let $\Omega = (S_{\Omega}, card_{\Omega})$ be a multiset (bag) of mappings returned by  the evaluation of a \sparql pattern and $Var(\Omega) = \{?x; ?x \in dom(\mu), \forall \mu \in S_{\Omega} \}$ be the set of variables in $\Omega$.
Let ${L} = order(Var(\Omega))$ be the ordered set of elements in $Var(\Omega)$.
We define a conversion function $\lambda$: $\Omega$ $\rightarrow R$, where $R = ({C}, {T})$ is a relation. R is defined such that its ordered set of columns (attributes) are the variables in $\Omega$ (i.e., ${C}=L$), and ${T} = (S_{{T}}, card_{T})$ is a multiset of (tuples) of values such that for every $\mu$ in $S_{\Omega}$, there is a tuple $\tau \in S_{{T}}$ of length $n = |(Var(\Omega))|$ and $\tau_i = \mu(L_i)$. The multiplicity function $(card_{T})$ is defined such that the multiplicity of $\tau$ is equal to the multiplicity of $\mu$ in $card_{\Omega}$.
\end{definition}
\end{sloppypar}

\vspace*{-4pt}
\begin{definition}[Equivalence]
\label{def:equivalence}
A \sparql pattern solution $\Omega = (S_{\Omega}, card_{\Omega})$ is {\em equivalent} to a relation $R = ({C}, {T})$, written($\Omega \equiv R$), if and only if $R=\lambda(\Omega)$.
\end{definition}

\vspace*{-4pt}
We are now ready to use this definition to present a lemma that defines the equivalent relational tables for the  main \sparql patterns used in our proof.

\vspace*{-4pt}
\begin{lemma}
\label{lem:equivofoperations}
If $P_1$ and $P_2$ are SPARQL patterns, then:
\begin{packed_enum}
\item[a.]
$\llbracket(P_1\ Join\ P_2)\rrbracket_G \equiv \lambda(\llbracket P_1\rrbracket_G) \Join \lambda(\llbracket P_1\rrbracket_G)$,
\item[b.]$\llbracket(P_1\ LeftJoin \ P_2)\rrbracket_G \equiv \lambda(\llbracket P_1\rrbracket_G) \leftouterjoin \lambda(\llbracket P_1\rrbracket_G)$,
\item[c.]$\llbracket(P_1\ Union \ P_2)\rrbracket_G \equiv \lambda(\llbracket P_1\rrbracket_G) \fullouterjoin \lambda(\llbracket P_1\rrbracket_G)$
\item[d.]$\llbracket(Extend(?x, E, P))\rrbracket_G \equiv \rho_{?x/E}(\lambda(\llbracket P\rrbracket_G)$
\item[e.]
$\llbracket(Filter(conds, P))\rrbracket_G \equiv 
\sigma_{conds}(\lambda(\llbracket P\rrbracket_G))$
\item[f.]$\llbracket(Project(cols, P))\rrbracket_G \equiv 
\Pi_{cols}(\lambda(\llbracket P\rrbracket_G))$
\item[g.]$\llbracket(GroupAgg(\emptyset, new\_col, fn, col, P)))\rrbracket_G \equiv$\\ $\gamma_{cols, fn(col) \mapsto new\_col}(\lambda(\llbracket P\rrbracket_G))$
\end{packed_enum}
\end{lemma}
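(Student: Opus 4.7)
The plan is to prove each of the seven equivalences (a)--(g) independently by unfolding the \sparql semantics from Section~\ref{subsec:sparql} on the left-hand side, applying the conversion function $\lambda$ from Definition~\ref{def:solution2relation}, and then matching the result against the bag-semantics definition of the corresponding relational operator on the right-hand side. The backbone observation, which I would establish once and reuse throughout, is the following correspondence: two mappings $\mu_1 \in \llbracket P_1 \rrbracket_G$ and $\mu_2 \in \llbracket P_2 \rrbracket_G$ are compatible ($\mu_1 \sim \mu_2$) if and only if the tuples $\lambda(\{\mu_1\})$ and $\lambda(\{\mu_2\})$ agree on the columns indexed by $dom(\mu_1) \cap dom(\mu_2)$. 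Moreover, a mapping $\mu$ with $dom(\mu) \subsetneq Var(P)$ corresponds under $\lambda$ to a tuple whose entries at positions in $Var(P) \setminus dom(\mu)$ are null. Using this correspondence, the $\lambda$ image of a \sparql pattern evaluation is exactly the relation whose schema is $order(Var(P))$ and whose tuples carry the same multiplicities as the mappings.

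For (a) and (b), I would unfold $\llbracket P_1\ Join\ P_2 \rrbracket_G$ and $\llbracket P_1\ LeftJoin\ P_2 \rrbracket_G$ directly from Section~\ref{subsec:sparql}: the mappings $\mu_1 \cup \mu_2$ arising from compatible pairs correspond bijectively to matching tuples in the bag inner join (respectively left outer join), and for LeftJoin the additional mappings drawn from $\llbracket P_1 \rrbracket_G$ that have no compatible partner correspond to the dangling left tuples padded with nulls. For (d), $Extend(?x, E, P)$ adjoins a fresh variable whose value is $\llbracket E \rrbracket_{\mu,G}$; when $E$ is itself a variable this is exactly the renaming operator $\rho_{?x/E}$ applied to $\lambda(\llbracket P \rrbracket_G)$. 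For (e), the truth value of $\llbracket E \rrbracket_{\mu,G}$ agrees with evaluating the same boolean predicate on the tuple $\lambda(\{\mu\})$, so only mappings surviving the filter appear, which is selection $\sigma_{conds}$. For (f), $Project$ restricts domains of mappings and sums multiplicities for collisions, which is the defining property of bag projection $\Pi_{cols}$.

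For (g), the $GroupAgg$ case, I would show that the partition of $\llbracket P \rrbracket_G$ into equivalence classes under restriction to the grouping variables $X$ corresponds, under $\lambda$, to the partition of $\lambda(\llbracket P \rrbracket_G)$ into groups by $group\_cols$; then the aggregate value $v_\mu = f(\{v \mid \mu' \in \llbracket P \rrbracket_G,\ \mu'|_X = \mu|_X,\ v = \llbracket E \rrbracket_{\mu', G}\})$ matches the relational aggregate $fn(col)$ computed on the corresponding group, giving $\gamma_{group\_cols,\ fn(col)\mapsto new\_col}$. Since the lemma instantiates $X = \emptyset$, the whole relation forms a single group, which slightly simplifies the bookkeeping.

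The case I expect to be the main obstacle is (c), the Union case. SPARQL's $Union$ is defined as multiset union of mappings, and when $Var(P_1) \neq Var(P_2)$ the resulting mappings have different domains, so $\lambda$ must pad with nulls in different columns; reconciling this with the relational full outer join requires care, because full outer join additionally \emph{combines} matching tuples across $P_1$ and $P_2$ whereas $Union$ does not. I would handle this by appealing to the context in which Lemma~\ref{lem:equivofoperations}(c) is used by \rdfframes: the union is always applied to two patterns of the form $P_1 \leftouterjoin P_2$ and $P_2 \leftouterjoin P_1$, which already share the same variable set and have already performed the matching, so at this point outer union and full outer join coincide on their multisets. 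A secondary subtlety is the treatment of the $Error$ branch in $Extend$ and $GroupAgg$; I would argue that in the restricted setting in which \rdfframes invokes these patterns (e.g.\ renaming by a bound variable, aggregating over bound numeric columns) the error branch is empty, so the equivalences reduce to the clean relational analogues above. The remainder of the argument is a routine bookkeeping of multiplicities to confirm that $card_\Omega$ and $card_T$ agree after each step.
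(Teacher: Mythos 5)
Your proposal follows essentially the same route as the paper: the paper's proof is a short sketch that simply unfolds the \sparql semantics, the definition of $\lambda$, and the bag semantics of the relational operators for each item, which is exactly your case-by-case definitional matching (compatibility of mappings versus agreement on shared columns, multiplicity bookkeeping, etc.). If anything, your treatment is more careful than the paper's one-paragraph argument — in particular your observation that item (c) holds only in the restricted context where the $Union$ is applied to $(P_1\ LeftJoin\ P_2)$ and $(P_2\ LeftJoin\ P_1)$, i.e., \rdfframes' own encoding of full outer join, and your explicit handling of the $Error$ branches of $Extend$ and $GroupAgg$, address subtleties that the paper's proof glosses over.
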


\begin{proof}
The proof of this lemma follows from
(1)~the semantics of \sparql operators presented in Section~\ref{subsec:sparql},
(2)~the well-known semantics of relational operators,
(3)~Definition~\ref{def:solution2relation} which specifies the function $\lambda$,
and
(4)~Definition~\ref{def:equivalence} which defines the equivalence between multisets of mappings and relations.
For each statement in the lemma, we use the definition of the function $\lambda$, the relational operator semantics, and the \sparql operator semantics to define the relation on the right side.
Then we use the definition of \sparql operators semantic to define the 
multiset on the left side. 
Finally,
Definition~\ref{def:equivalence} proves the statement.
\end{proof}

\vspace*{-4pt}
Finally, we present the main theorem in this section, which guarantees the semantic correctness of the \rdfframes query generation algorithm.

\vspace*{-4pt}
\begin{sloppypar}
\begin{theorem}
	Given a graph $G$, every RDFFrame $D$ that is returned by a sequence of \rdfframes operators $O_D = [o_1, \dots, o_k]$ on $G$ is equivalent to the evaluation of the \sparql pattern $P = F(O_D)$ on G. 
	In other words, $D\equiv\llbracket F(O_D)\rrbracket_G$.
\end{theorem}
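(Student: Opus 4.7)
The plan is to prove the theorem by structural induction on the RDFFrame $D$, following Definition~\ref{def:recursivedataset}, or equivalently by strong induction on the length $k$ of the operator sequence $O_D$. The inductive hypothesis asserts that for every RDFFrame $D'$ built by a strictly shorter sequence, $D' \equiv \llbracket F(O_{D'}) \rrbracket_G$. For the base case $k=1$, the only operator that creates a non-empty RDFFrame is $seed(col_1, col_2, col_3)$, whose semantics in Section~\ref{subsec:operators} give $D = \pi_{N \cap \{col_1,col_2,col_3\}}(D_t)$, while Table~\ref{tab:mappings} defines $F([seed]) = Project(Var(t), t)$. Combining the \sparql semantics of triple patterns and $Project$ from Section~\ref{subsec:sparql} with Lemma~\ref{lem:equivofoperations}(f) and Definition~\ref{def:equivalence} yields the base case.

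For the inductive step I would split on the type of the last operator $o_k$. When $o_k$ is unary (any of $expand$, $filter$, $select\_cols$, $group\_by.aggregation$, or $aggregate$), we have $F(O_D) = g(o_k, F(O_{[1:k-1]}))$ and, by the inductive hypothesis, the RDFFrame $D'$ produced by the prefix satisfies $D' = \lambda(\llbracket F(O_{[1:k-1]}) \rrbracket_G)$. The semantics of $o_k$ in Section~\ref{subsec:operators} express $D$ as a relational-algebra function of $D'$ (an inner or left outer join with a freshly-evaluated triple pattern for $expand$, a selection for $filter$, a projection for $select\_cols$, or a grouping-with-aggregation for the grouping operators), and the matching clause of Lemma~\ref{lem:equivofoperations} supplies the pattern-level equivalence, so composing with the inductive hypothesis gives $D \equiv \llbracket F(O_D) \rrbracket_G$. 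When $o_k = join(D_2, \ldots)$, both $O_{[1:k-1]}$ and $O_{D_2}$ are strictly shorter than $O_D$, so the inductive hypothesis applies to both prefixes, and the proof concludes by combining Lemma~\ref{lem:equivofoperations}(a)--(c) with clause~(d) for the $Extend$-renaming that realises the $\rho_{new\_col/\cdot}$ required by the semantics of $join$.

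The main obstacle I anticipate is the full outer join sub-case. \sparql has no native $\fullouterjoin$, so Table~\ref{tab:mappings} emits $(P_1\ LeftJoin\ P_2) \cup (P_2\ LeftJoin\ P_1)$ around $Extend$-based renamings. Establishing equivalence here requires verifying that the multiset produced by this Union, after applying $\lambda$, matches the relational identity $D_1 \fullouterjoin D_2 = (D_1 \leftouterjoin D_2) \cup \rho_{reorder}(D_2 \leftouterjoin D_1)$ cited in Section~\ref{sec:query-generation}; multiplicities on the two sides must be tracked carefully, and the $\rho_{reorder}$ must line up with the implicit column ordering imposed by $\lambda$. A secondary subtlety is that the query model in Section~\ref{sec:query-generation} introduces nested subqueries for grouped RDFFrames and for full outer joins, whereas $F$ does not explicitly insert a $Project$ wrapper at each nesting; I would handle this by observing that wrapping a pattern $P$ in $Project(Var(P), P)$ is semantically transparent under the \sparql semantics of Section~\ref{subsec:sparql}, so the nesting does not perturb the established equivalence. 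The modifiers $sort$ and $head$, which are absent from Table~\ref{tab:mappings}, fit in by extending $g$ with the $Order$, $Limit$, and $\mathit{Offset}$ constructs and adding routine companion clauses to Lemma~\ref{lem:equivofoperations}.
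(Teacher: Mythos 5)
Your proposal follows essentially the same route as the paper: structural induction on the RDFFrame (equivalently on the length of the operator sequence), with the $seed$ base case discharged via Table~\ref{tab:mappings} and Lemma~\ref{lem:equivofoperations}(f), and the inductive step case-split on the last operator combining the operator semantics of Section~\ref{subsec:operators} with the matching clauses of Lemma~\ref{lem:equivofoperations}. Your added attention to the full outer join multiplicities, the $Extend$-based renaming via clause~(d), the semantic transparency of the $Project$ wrappers introduced by nesting, and the $sort$/$head$ modifiers only makes explicit details the paper's proof treats tersely, so the argument is correct and equivalent.
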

\end{sloppypar}

\begin{proof}
We prove that $D\equiv\lambda(\llbracket F(O_D)\rrbracket_G)$ via structural induction on non-empty RDFFrame $D$.
For simplicity, we denote the proposition $D\equiv\llbracket F(O_D)\rrbracket_G$ as $A(D)$.

\noindent {\em Base case:} Let $D$ be an RDFFrame created by one \rdfframes operator $O_D = [seed(col_1, col_2, col_3)]$. The first operator (and the only one in this case) has to be the $seed$ operator since it is the only operator that takes only a knowledge graph as input and returns an RDFFrame. From Table~\ref{tab:mappings}:
\begin{align*}
    F(O_D) &= g(seed(col_1, col_2, col_3), Null) \\
          &=Project(Var(t),t)
\end{align*}
\noindent where $t = (col_1, col_2, col_3)$. By definition of the \rdfframes operators in Section~\ref{sec:data-model}, $D = \Pi_{X \cap \{col_1, col_2, col_3\}}$ $(\lambda(\llbracket (t) \rrbracket_G))$ and by Lemma~\ref{lem:equivofoperations}(f), $A(D)$ holds.

\noindent {\em Induction hypothesis:} Every \rdfframes operator takes as input one or two RDFFrames $D_1, D_2$ and outputs an RDFFrame $D$. 
Without loss of generality, assume that both $D_1$ and $D_2$ are non-empty and $A(D_1)$ and $A(D_2)$ hold, i.e., $D_1\equiv\llbracket F(O_{D_1})\rrbracket_G$ and 
$D_2\equiv\llbracket F(O_{D_2})\rrbracket_G$.

\noindent {\em Induction step:} Let $D = D_1.Op(\text{optional } D_2)$, $P_1 = F(O_{D_1})$, and $P_2 = F(O_{D_2})$.
We use \rdfframes semantics to define $D$, the mapping $F$ to define the new pattern $P$, then 
Lemma~\ref{lem:equivofoperations} to prove the equivalence between $F$ and $D$.
We present the different cases next.
\squishlist
    \item If $Op$ is $expand(x, pred, y, out, false)$ then: $D = D_1 \Join \lambda(\llbracket t\rrbracket_G)$ according to the definition of the operator in Section~\ref{subsec:operators} and Table~\ref{tab:mappings}, where $t$ is the triple pattern $(?x$, $pred$, $?y)$. By the induction hypothesis, it holds that $D_1 = \lambda(\llbracket P_1)\rrbracket_G)$. 
    Thus, it holds that $D = \lambda(\llbracket P_1\rrbracket_G) \Join \lambda(\llbracket t \rrbracket_G)$ and by Lemma~\ref{lem:equivofoperations}(a), $A(D)$ holds.
    The same holds when $Op$ is $expand(x, pred, y, in, false)$ except that $t=(?y, pred, ?x)$.


    \begin{sloppypar}
    \item If $Op$ is $join(D_2, col, col_2, \Join, new\_col)$ then:
    $D = \rho_{new\_col/col}(D_1) \Join \rho_{new\_col/col_2}(D_2) $, and by $A(D_1)$, $D_1 = \lambda(\llbracket P_1\rrbracket_G)$ and $D_2 = \lambda(\llbracket P_2\rrbracket_G)$. Thus, $D = \rho_{new\_col/col}\lambda(\llbracket P_1\rrbracket_G)) 
    \Join \rho_{new\_col/col_2}(\lambda(\llbracket P_2\rrbracket_G))$ and by  Lemma~\ref{lem:equivofoperations}(a,c), $A(D)$ holds. The same argument holds for other types of join, using the relevant parts of Lemma~\ref{lem:equivofoperations}.
    \end{sloppypar}
    
    \item If $Op$ is $filter(conds = [cond_1 \land cond_2 \land \dots \land cond_k])$ then: $D = \sigma_{conds}(D_1)$, and by $A(D_1)$, $D_1 = \lambda(\llbracket P_1\rrbracket_G)$. So, $D = \sigma_{conds}\lambda(\llbracket P_1\rrbracket_G))$ and by Lemma~\ref{lem:equivofoperations}(e), $A(D)$ holds.

    
    \item If $Op$ is $groupby(cols).aggregation(f, col, new\_col)$ then: $D = \gamma_{cols, f(col) \mapsto new\_col}(D_1)$, and by $A(D_1)$, $D_1 = \lambda(\llbracket P_1\rrbracket_G)$. So, $D = \gamma_{cols, f(col) \mapsto new\_col}\lambda(\llbracket P_1\rrbracket_G))$ and by  Lemma~\ref{lem:equivofoperations}(f,g), $A(D)$ holds.
    
\squishend
Thus, $A(D)$ holds in all cases.
\end{proof}

\section{Evaluation}
\label{sec:performance}

We present an experimental evaluation of \rdfframes
in which our goal is to answer two questions:
(1)~How effective are the design decisions made in \rdfframes? 
and
(2)~How does \rdfframes perform compared to alternative baselines?

We use two workloads for this experimental study.
The first is made up of three case studies consisting of   
machine learning tasks on two real-world knowledge graphs. 
Each task starts with a data preparation step that extracts a pandas dataframe from the knowledge graph.
This step is the focus of the case studies.
In the next section, we present the \rdfframes code for each case study and the corresponding \sparql query generated by \rdfframes.
As in our motivating example, we will see that the \sparql queries are longer and more complex than the \rdfframes code, thereby showing that \rdfframes can indeed simplify access to knowledge graphs.
The full Python code for the case studies can be found in Appendix~\ref{app:fullcode}.
The second workload in our experiments is a synthetic workload consisting of 16 queries.
These queries are designed to exercise different features of \rdfframes for the purpose of benchmarking.
We describe the two workloads next, followed by the experimental setup and the results.
\vspace*{-7pt}
\subsection{Case Studies}
\label{subsec:case-studies}
\vspace*{8pt}
\subsubsection{Movie Genre Classification}
\label{subsec:genreclass}

\vspace*{-10pt}
Classification is a basic supervised machine learning task. 
This case study applies a classification task on movie data extracted from the DBpedia
knowledge graph.
Many knowledge graphs, including DBpedia, are heterogeneous,
with information about diverse topics, so extracting a
topic-focused dataframe for classification is challenging.

This task uses \rdfframes to build a dataframe of movies from DBpedia, along with
a set of movie attributes that can be used for 
movie genre classification.
The task bears some similarity to the code in Listing~\ref{lst:motivating_code}.
Let us say that the classification dataset that we want
includes movies that star
American actors (since they are assumed to have a global reach)
or prolific actors (defined as those who have starred in 100 or more movies).
We want the movies starring these actors, and for each movie, we extract the movie name (i.e., title), actor name, topic, country of production, and genre.
Genre is not always available so it is an optional predicate.
The full code for this data preparation step is shown in Listing~\ref{lst:genre}, and the \sparql query generated by \rdfframes is shown in Listing~\ref{lst:genresparql}.



The extracted dataframe can be used as a classification dataset by any popular Python machine learning library.
The movies that have the genre available in the dataframe can be used as labeled training data to train a classifier.
The features for this classifier are the attributes of the movies and the actors, and the classifier is trained to predict the genre of a movie based on these features.
The classifier can then be used to predict the genres of all movies that are missing the genre.

Note that the focus of \rdfframes is the data preparation step of a machine learning pipeline
(i.e.,~creating the dataframe).
That is, \rdfframes addresses the following problem: Most machine learning pipelines require as their starting point an input dataframe,
and there is no easy way to get such a dataframe from a knowledge graph while leveraging an RDF engine.
Thus, the focus of \rdfframes is enabling the user to obtain a dataframe from an RDF engine, and not how the machine learning pipeline uses this dataframe.
Nevertheless, it is interesting to see this dataframe within an end-to-end  machine learning pipeline.
Specifically, for the current case study,
can the dataframe created by \rdfframes be used for movie genre classification?
We emphasize that the accuracy of the classifier is not our main concern here;
our concern is demonstrating \rdfframes in an end-to-end machine learning pipeline.
Issues such as using a complex classifier, studying feature importance, or analyzing the distribution of the retrieved data are beyond the scope of \rdfframes.

To show \rdfframes in an end-to-end machine learning pipeline,
we built a classifier based on the output of
Listing~\ref{lst:genre}
to classify the six most frequent movie genres,
specifically, drama, sitcom, science fiction, legal drama, comedy, and fantasy.
The classification dataset consisted of 7,635 movies that represent the English movies in these six movie genres.
We trained a random forest classifier using the scikit-learn machine learning library based on movie features such as actor country, movie country, subject, and actor name.
This classifier achieved 92.4\% accuracy on evaluation data that is 30\% of the classification dataset.

We performed a similar experiment on song data from DBpedia. 
We extracted 27,956 triples of English songs in DBpedia along with their features such as album, writer, title, artist, producer, album title, and studio. 
We used the same methodology as in the movie genre classification case study to classify songs into genres such as alternative rock, hip hop, indie rock, and pop-punk.
The accuracy achieved by a random forest classifier in this case was 70.9\%.

\vspace{8pt}
\begin{lstlisting}[
aboveskip=-0.0\baselineskip,
belowskip=-0.0\baselineskip,
language=Python,
showspaces=false,
basicstyle=\ttfamily\scriptsize,
commentstyle=\color{gray},
otherkeywords={expand, filter, group_by, join, feature_domain_range, .count, cache, unique},
keywordstyle=\color{blue},
caption={\rdfframes code -  Movie genre classification.},
captionpos=b,
label={lst:genre}]
movies = graph.feature_domain_range('dbpp:starring', 'movie', 'actor')
movies = movies.expand('actor',[('dbpp:birthPlace',
    'actor_country'), ('rdfs:label', 'actor_name')])\
        .expand('movie', [('rdfs:label', 'movie_name'),
        ('dcterms:subject', 'subject'),
        ('dbpp:country', 'movie_country'),
        ('dbpo:genre', 'genre', Optional)]).cache()
american = movies.filter({'actor_country':\
 ['=dbpr:UnitedStates']})
prolific = movies.group_by(['actor'])\
        .count('movie', 'movie_count', unique=True)\
        .filter({'movie_count': ['>=100']})
dataset = american.join(prolific,'actor', OuterJoin)\
        .join(movies, 'actor', InnerJoin)
\end{lstlisting}

\begin{lstlisting}[
  aboveskip=-0.0\baselineskip,
  belowskip=-0.0\baselineskip,
  language=SQL,
  breaklines=true,
  showspaces=false,
  basicstyle=\ttfamily\scriptsize,
  commentstyle=\color{gray},
  keywordstyle=\color{blue},
  otherkeywords={OPTIONAL, FILTER},
  caption={\sparql query generated by \rdfframes for the code shown in Listing~\ref{lst:genre}.},
  captionpos=b,
  label={lst:genresparql}
  ]
  SELECT DISTINCT  ?actor_name ?movie_name ?actor_country ?genre ?subject
  FROM <http://dbpedia.org>
  WHERE
  { ?movie  dbpp:starring  ?actor .
    ?movie  rdfs:label  ?movie_name .
    ?movie dcterms:subject  ?subject .
    ?actor dbpp:birthPlace  ?actor_country .
    ?actor rdfs:label ?actor_name
    OPTIONAL
    { ?movie  dbpp:genre  ?genre }
    {{ SELECT  * WHERE
      {{ SELECT  * WHERE
        { ?movie  dbpp:starring  ?actor .
          ?movie rdfs:label  ?movie_name .
          ?movie dcterms:subject  ?subject .
          ?actor dbpp:birthPlace  ?actor_country .
          ?actor  rdfs:label  ?actor_name
          FILTER regex(str(?actor_country), "USA")
          OPTIONAL 
            { ?movie  dbpp:genre ?genre }
          }
      }
  OPTIONAL
    { SELECT DISTINCT ?actor (COUNT(DISTINCT ?movie) AS ?movie_count)
      WHERE
          { ?movie  dbpp:starring ?actor .
          ?movie  rdfs:label  ?movie_name .
          ?movie  dcterms:subject ?subject .
          ?actor  dbpp:birthPlace ?actor_country .
          ?actor  rdfs:label ?actor_name
          OPTIONAL 
             { ?movie  dbpp:genre ?genre }
          }
          GROUP BY ?actor
          HAVING ( COUNT(DISTINCT ?movie) >= 100 )
            }
          }
        }
  UNION
  { SELECT  * WHERE
   {{ SELECT DISTINCT ?actor (COUNT(DISTINCT ?movie) AS ?movie_count) WHERE
     { ?movie  dbpp:starring ?actor .
       ?movie  rdfs:label  ?movie_name .
       ?movie  dcterms:subject ?subject .
       ?actor  dbpp:birthPlace ?actor_country .
       ?actor  rdfs:label ?actor_name
      OPTIONAL 
        { ?movie  dbpp:genre  ?genre }
      }
      GROUP BY ?actor
      HAVING ( COUNT(DISTINCT ?movie) >= 100 )
      }
      OPTIONAL
      { SELECT  * WHERE
        { ?movie  dbpp:starring ?actor .
          ?movie  rdfs:label  ?movie_name .
          ?movie  dcterms:subject ?subject .
          ?actor  dbpp:birthPlace ?actor_country .
          ?actor  rdfs:label ?actor_name
          FILTER regex(str(?actor_country), "USA")
          OPTIONAL
            { ?movie  dbpp:genre ?genre }
          }
        }
      }
    }
  }
}
  \end{lstlisting}

  \vspace*{-6pt}
\subsubsection{Topic Modeling}
\label{subsec:topicmodeling}

\vspace*{-2pt}
Topic modeling is a statistical technique commonly used to identify hidden contextual topics in the text. 
In this case study, we use topic modeling to identify the active topics of research
in the database community.
We define these as the topics of recent papers published by authors who have published many SIGMOD and VLDB papers.
This is clearly an artificial definition, but it enables us to study the capabilities and performance of \rdfframes.
As stated earlier,  we are focused on data preparation not 
the details of the machine learning task.

\begin{lstlisting}[
aboveskip=0.5\baselineskip,
belowskip=-0.0\baselineskip,
language=Python,
showspaces=false,
breaklines=true,
basicstyle=\ttfamily\scriptsize,
commentstyle=\color{gray},
otherkeywords={expand, filter, group_by, join, entities, .count, cache, unique, select_cols},
keywordstyle=\color{blue},
caption={\rdfframes code - Topic modeling.},
captionpos=b,
label={lst:topic}]
papers = graph.entities('swrc:InProceedings','paper')
papers = papers.expand('paper',[('dc:creator',\
    'author'), ('dcterm:issued', 'date'),\
    ('swrc:series', 'conference'),\
        ('dc:title', 'title')]).cache()
authors = papers.filter({'date': ['>=2000'],
'conference': ['In(dblp:vldb, dblp:sigmod)']})
        .group_by(['author']).count('paper', 'n_papers')\
        .filter({'n_papers': '>=20', 'date': ['>=2010']})\
titles = papers.join(authors, 'author', InnerJoin)\
        .select_cols(['title'])
\end{lstlisting}

\begin{lstlisting}[
  aboveskip=-0.0\baselineskip,
  belowskip=-0.0\baselineskip,
  language=SQL,
  showspaces=false,
  breaklines=true,
  basicstyle=\ttfamily\scriptsize,
  commentstyle=\color{gray},
  keywordstyle=\color{blue},
  otherkeywords={OPTIONAL, FILTER},
  caption={\sparql query generated by \rdfframes for the code shown in Listing~\ref{lst:topic}.},
  captionpos=b,
  label={lst:topicsparql}
  ]
  SELECT  ?title
  FROM <http://dblp.l3s.de>
  WHERE
    { ?paper  dc:title       ?title ;
              rdf:type       swrc:InProceedings ;
              dcterm:issued  ?date ;
              dc:creator     ?author 
      FILTER ( year(xsd:dateTime(?date)) >= 2005 )
      { SELECT  ?author
        WHERE
          { ?paper  rdf:type       swrc:InProceedings ;
                    swrc:series    ?conference ;
                    dc:creator     ?author ;
                    dcterm:issued  ?date
            FILTER ( ( year(xsd:dateTime(?date)) >= 2005 ) 
              && ( ?conference IN (dblprc:vldb, dblprc:sigmod) ) )
          }
        GROUP BY ?author
        HAVING ( COUNT(?paper) >= 20 )
      }
    }
  \end{lstlisting}

The dataframe required for this task is extracted from the DBLP knowledge graph represented in RDF through the sequence of \rdfframes operators shown in Listing~\ref{lst:topic}.
First, we identify the authors who have published 20 or more papers in SIGMOD and VLDB since the year 2000,
which requires using the \rdfframes grouping, aggregation, and filtering capabilities.
For the purpose of this case study, these are considered the thought leaders
of the field of databases.
Next, we find the titles of all papers published by these authors since 2010.
The \sparql query generated by \rdfframes is shown in Listing~\ref{lst:topicsparql}.

We then run topic modeling on the titles to identify the topics of the papers,
which we consider to be the active topics of database research.
We use off-the-shelf components from the rich ecosystem of pandas libraries to implement topic modeling (see Appendix~\ref{app:fullcode}). Specifically, we use NLP libraries for stop-word removal and scikit-learn for topic modeling using SVD.
This shows the benefit of using \rdfframes to get data into a pandas dataframe with a few lines of code,
since one can then utilize components from the PyData ecosystem.

\subsubsection{Knowledge Graph Embedding}
\label{subsec:kgembedding}
\vspace*{-10pt}
Knowledge graph embeddings 
are widely used
relational learning models,
and they are state of the art on
benchmark datasets for link prediction and fact classification~\cite{journals/tkde/WangMWG17, wang2018multi}.
The input to these models is a dataframe of triples, i.e., a table of three columns: $[subject, predicate, object]$ where the $object$ is a URI representing an entity (i.e., not a literal).
Currently, knowledge graph embeddings are typically evaluated only on small pre-processed subsets of knowledge graphs like FB15K~\cite{bordes2013transe} and WN18~\cite{bordes2013transe} rather than the full knowledge graphs, and thus, the validity of their performance results has been questioned recently in multiple papers~\cite{pujara2017sparsity, dettmers2018convolutional}. 
Filtering the knowledge graph to contain only entity-to-entity triples and loading the result in a dataframe is a necessary first step in constructing knowledge graph embedding models on full knowledge graphs.
\rdfframes can perform this first step using one line of code as shown in Listing~\ref{lst:kge} (generated \sparql in Listing~\ref{lst:kgesparql}).
With this line of code, the filtering can be performed efficiently in an RDF engine, and
\rdfframes handles issues related to communication with the engine and integrating with PyData. These issues become important, especially when 
dealing with large knowledge graphs where
the resulting dataframe has millions of rows.

\begin{lstlisting}[
aboveskip=0.5\baselineskip,
belowskip=-0.0\baselineskip,
language=Python,
breaklines=true,
showspaces=false,
basicstyle=\ttfamily\scriptsize,
commentstyle=\color{gray},
otherkeywords={feature_domain_range, expand, filter, group_by, join, entities, .count, cache, unique, select_cols},
keywordstyle=\color{blue},
caption={\rdfframes code - Knowledge graph embedding.},
captionpos=b,
label={lst:kge}]
graph.feature_domain_range(s, p, o)\
     .filter({o: ['isURI']})
\end{lstlisting}

\begin{lstlisting}[
aboveskip=-0.0\baselineskip,
belowskip=-0.0\baselineskip,
  language=SQL,
  breaklines=true,
  showspaces=false,
  basicstyle=\ttfamily\scriptsize,
  commentstyle=\color{gray},
  keywordstyle=\color{blue},
  otherkeywords={OPTIONAL, FILTER},
  caption={\sparql query corresponding to \rdfframes code shown in Listing~\ref{lst:kge}.},
  captionpos=b,
  label={lst:kgesparql}
  ]
SELECT * 
FROM <http://dblp.13s.de/>
WHERE {
        ?sub ?pred ?obj .
        FILTER ( isIRI(?obj) ) 
      }
\end{lstlisting}

\vspace*{-6pt}
\subsection{Synthetic Workload}
\label{subsec:sythetic}

\vspace*{-3pt}
While the case studies in the previous section show \rdfframes in real applications,
it is still desirable to have a more comprehensive evaluation of the framework.
To this end, we created a synthetic workload
consisting of 16 queries
written in \rdfframes that exercise different capabilities of the framework.
All the queries are on the DBpedia knowledge graph, and two queries join DBpedia with the YAGO knowledge graph.
One query joins the three knowledge graphs DBpedia, YAGO, and DBLP.
Four of the queries use only expand and filter (up to 10 expands, including some with optional predicates).
Four of the queries use grouping with expand (including one with expand after the grouping).
Eight of the queries use joins, including complex queries that exercise features such as outer join, multiple joins, joins between different graphs, and joins on grouped datasets.
A description of the  queries and the \rdfframes features and \sparql capabilities that they exercise
can be found in Appendix~\ref{app:queries}.

\subsection{Experimental Setup}
\label{subsec:expsetup}

\vspace*{5pt}
\subsubsection{Dataset Details}

\vspace*{-7pt}
The three knowledge graphs used in the evaluation have
different sizes and statistical features.
The first is the English version of the DBpedia knowledge graph. We extracted the December 2020 core collection from DBpedia Databus.\footnote{\url{https://databus.dbpedia.org/dbpedia/collections/latest-core}} The collection contains 6 billion triples.
The second is the DBLP computer science bibliography dataset
(2017 version)
containing 88 million triples.\footnote{\url{http://www.rdfhdt.org/datasets}}
The third (used in three queries in the synthetic workload) is YAGO version 3.1, containing 1.6 billion triples.
DBLP is relatively small, structured and dense, while DBpedia and YAGO are heterogeneous and sparse.

\vspace*{3pt}
\subsubsection{Hardware and Software Configuration}

\vspace*{4pt}
We use an Ubuntu server with 128GB of memory
to
run a Virtuoso OpenLink Server (version 7.2.6-rc1.3230-pthreads as of Jan 9 2019) with its default configuration.
We load the DBpedia, DBLP, and YAGO knowledge graphs to the Virtuoso server. \rdfframes connects to the server to process \sparql queries over HTTP using \texttt{SPARQLWrapper},\footnote{\url{https://rdflib.github.io/sparqlwrapper}}
a Python library that provides
a wrapper for \sparql endpoints.
Recall that the decision to communicate with the server over HTTP rather than the cursor mechanism of Virtuoso was made to ensure maximum generality and flexibility.
When sending \sparql queries directly to the server, we use the \texttt{curl} tool.
The client always runs on a separate core of the same machine as the Virtuoso server so we do not incur communication overhead. In all experiments, we report the average running time of three runs.

\vspace*{3pt}
\subsubsection{Alternatives Compared}

\vspace*{4pt}
Our goal is to evaluate the design decisions of \rdfframes
and to compare it against alternative baselines.
To evaluate the design decisions of \rdfframes,
we ask two questions:
(1)~How important is it to generate optimized \sparql queries rather than using a simple query generation approach?
and
(2)~How important is it to push the processing of relational operators into the RDF engine?
Both of these design choices are clearly beneficial
and the goal is to quantify the benefit.

To answer the first question, we compare \rdfframes against an alternative that 
uses a naive query generation strategy. Specifically, for each API call to \rdfframes, we generate a subquery that contains the pattern corresponding to that API call and we finally join all the subqueries in one level of nesting with one outer query. For example, each call to an expand creates a new subquery containing one triple pattern described by the expand operator.
We refer to this alternative as \texttt{Naive Query Generation}.
The naive queries for the first two case studies are shown in Appendices~\ref{app:naiveQuery} and~\ref{app:naiveQuery2}. The \sparql query for the third case study is simple enough that Listing~\ref{lst:kgesparql} is also the naive query.

To answer the second question, 
we compare to an alternative that uses \rdfframes (with optimized query generation) only for graph navigation using the $seed$ and $expand$ operators,
and performs any relational-style processing in pandas. We refer to this alternative as
\texttt{Navigation + pandas}.

If we do not use \rdfframes, we can envision three alternatives for pre-processing the data and loading it into a dataframe, and we compare against all three:
\squishlist
\item
Do not use an RDF engine at all, but rather write an ad-hoc script that runs on the knowledge graph stored in some RDF serialization format.
To implement this solution we write scripts using the rdflib library\footnote{\url{https://github.com/RDFLib/rdflib}} to load the RDF dataset into pandas, and use pandas operators for any additional processing.
The rdflib library can process any RDF serialization format, and in our case the data was stored in the N-Triples format. We refer to this alternative as \texttt{rdflib + pandas}.

\item
Use an RDF engine, and use a simple \sparql query to load the RDF dataset into a dataframe. Use pandas for additional processing.
This is a variant of the first alternative but it uses \sparql instead of rdflib.
The advantage is that the required \sparql is very simple, but still
benefits from 
the processing capabilities of the RDF engine.
We refer to this alternative as \texttt{\sparql + pandas}.

\item
Use a \sparql query written by an expert (in this case, the authors of the paper) to do all the pre-processing inside the RDF engine and output the result to a dataframe.
This alternative takes full advantage of the capabilities of the RDF engine, but suffers from the ``impedance mismatch'' described in the introduction: \sparql uses a different programming style compared to machine learning tools and requires expertise to write,
and additional code is required to export the data into a dataframe.
We refer to this alternative as \texttt{Expert \sparql}.

\squishend

We verify that the results of all alternatives are identical. Note that \rdfframes, \texttt{Naive Query Generation},
and \texttt{Expert \sparql} generate semantically equivalent \sparql queries.
The query optimizer of an RDF engine should be able to produce query execution plans for all three queries that are identical or at least have similar execution cost. We will see that Virtuoso, being an industry-strength RDF engine, does indeed deliver the same performance for all three queries in many cases. However, we will also see that there are cases where this is not true, which is expected due to the complexity of optimizing \sparql queries.

\vspace*{-6pt}
\subsection{Results on Case Studies}
\label{subsec:results-casestudies}

\begin{figure*}[tb]
  \centering
  \subfloat[(a)]
    {\includegraphics[width=0.32\textwidth]{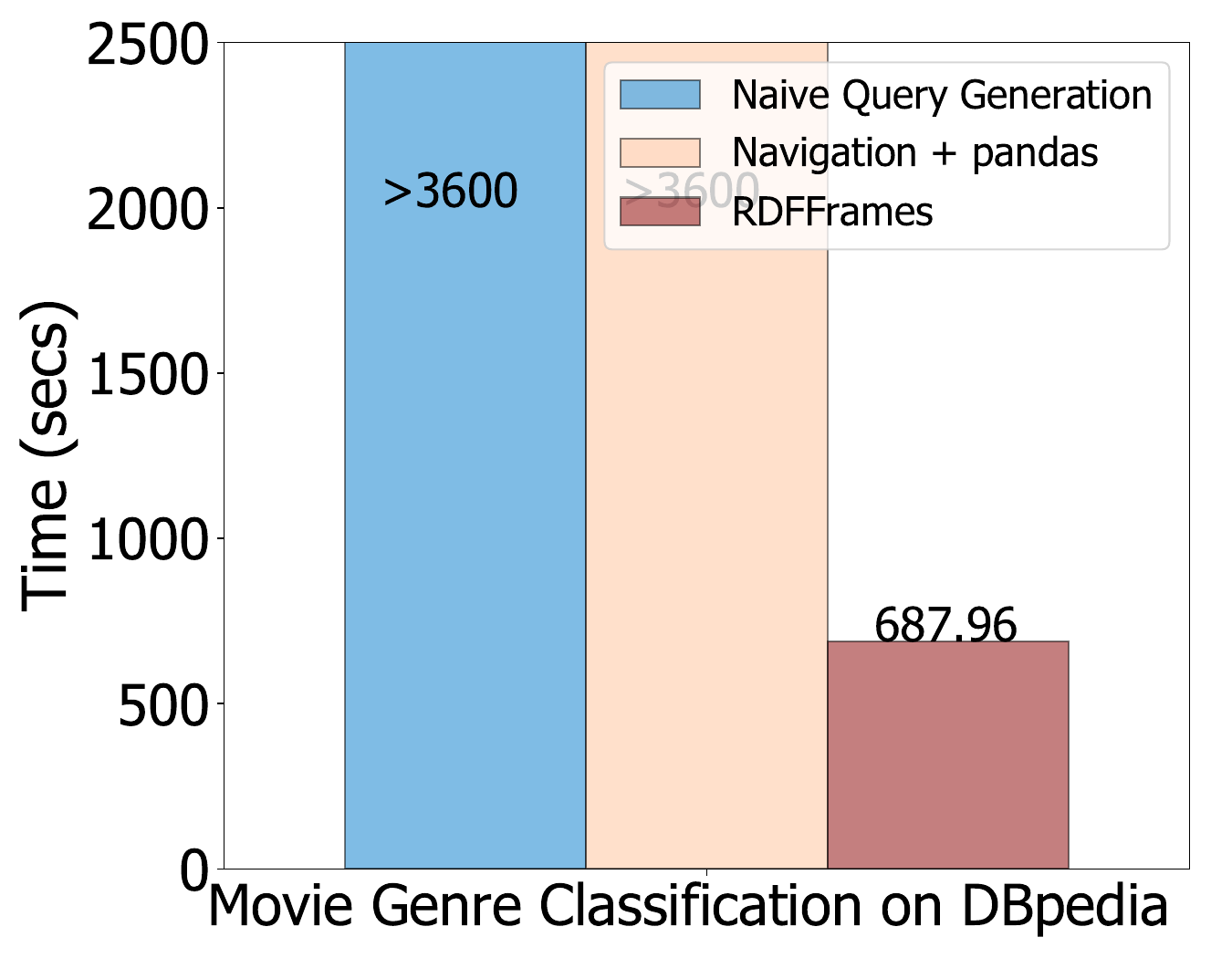}\label{subfig:movie1}}
    \subfloat[(b)]
  {\includegraphics[width=0.32\textwidth]{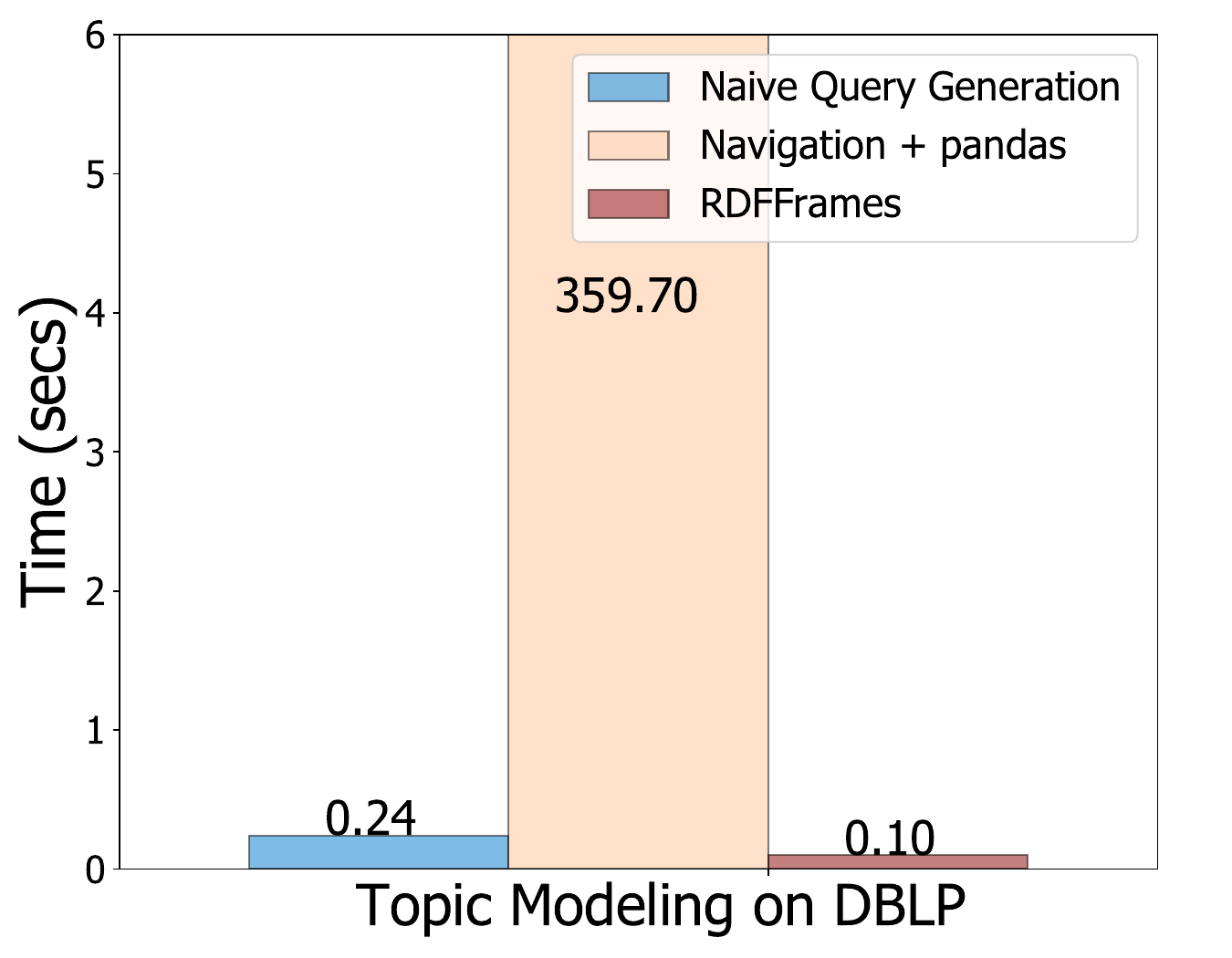}\label{subfig:topic1}}
  \subfloat[(c)]
  {\includegraphics[width=0.32\textwidth]{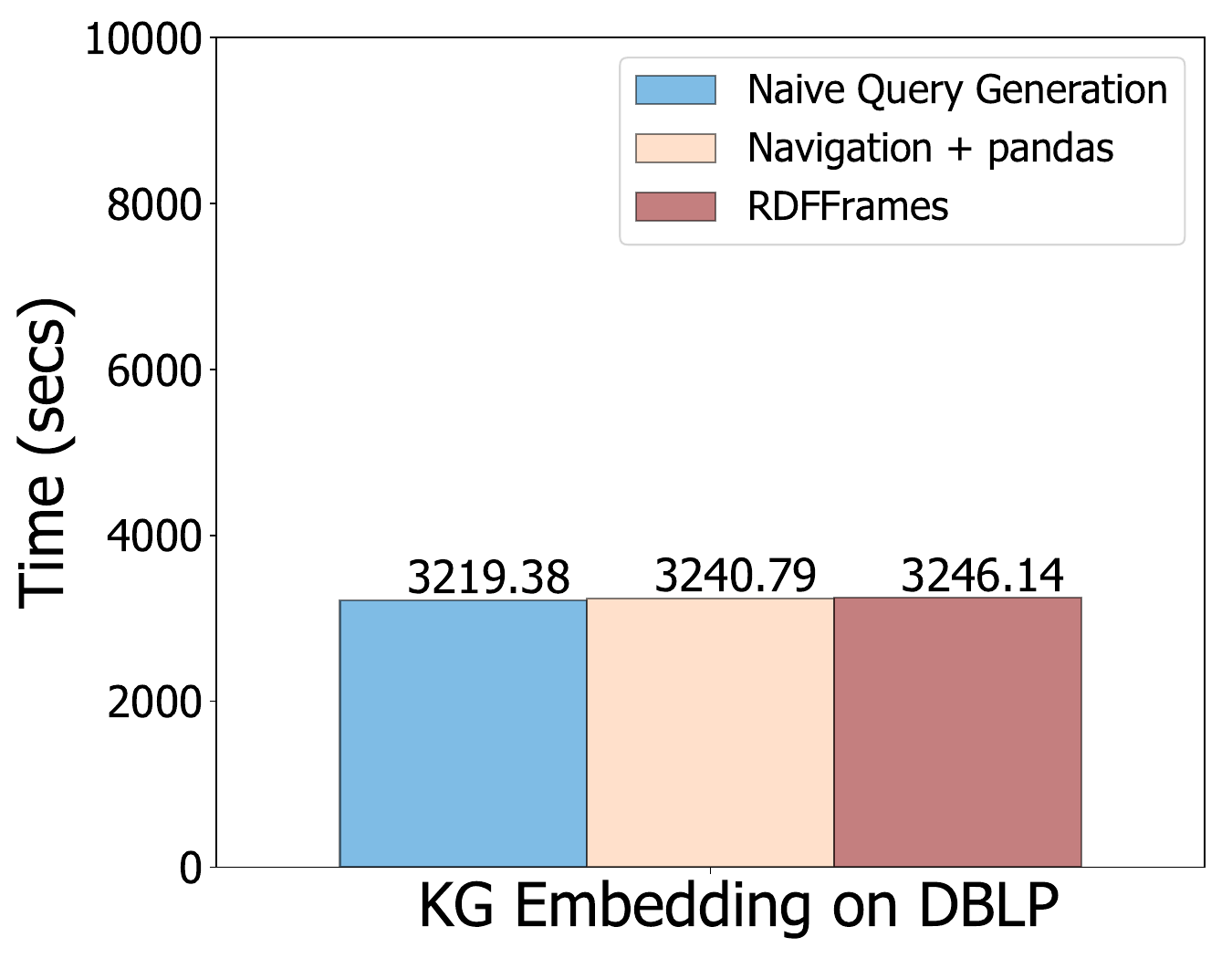}\label{subfig:kge1}}
  \vspace*{-8pt}
  \caption{Evaluating the design of \rdfframes.}
  \label{fig:alternatives}
  \vspace{-\baselineskip}
\end{figure*}

\begin{figure*}[tb]
  \centering
  \subfloat[(a)]
    {\includegraphics[width=0.32\textwidth]{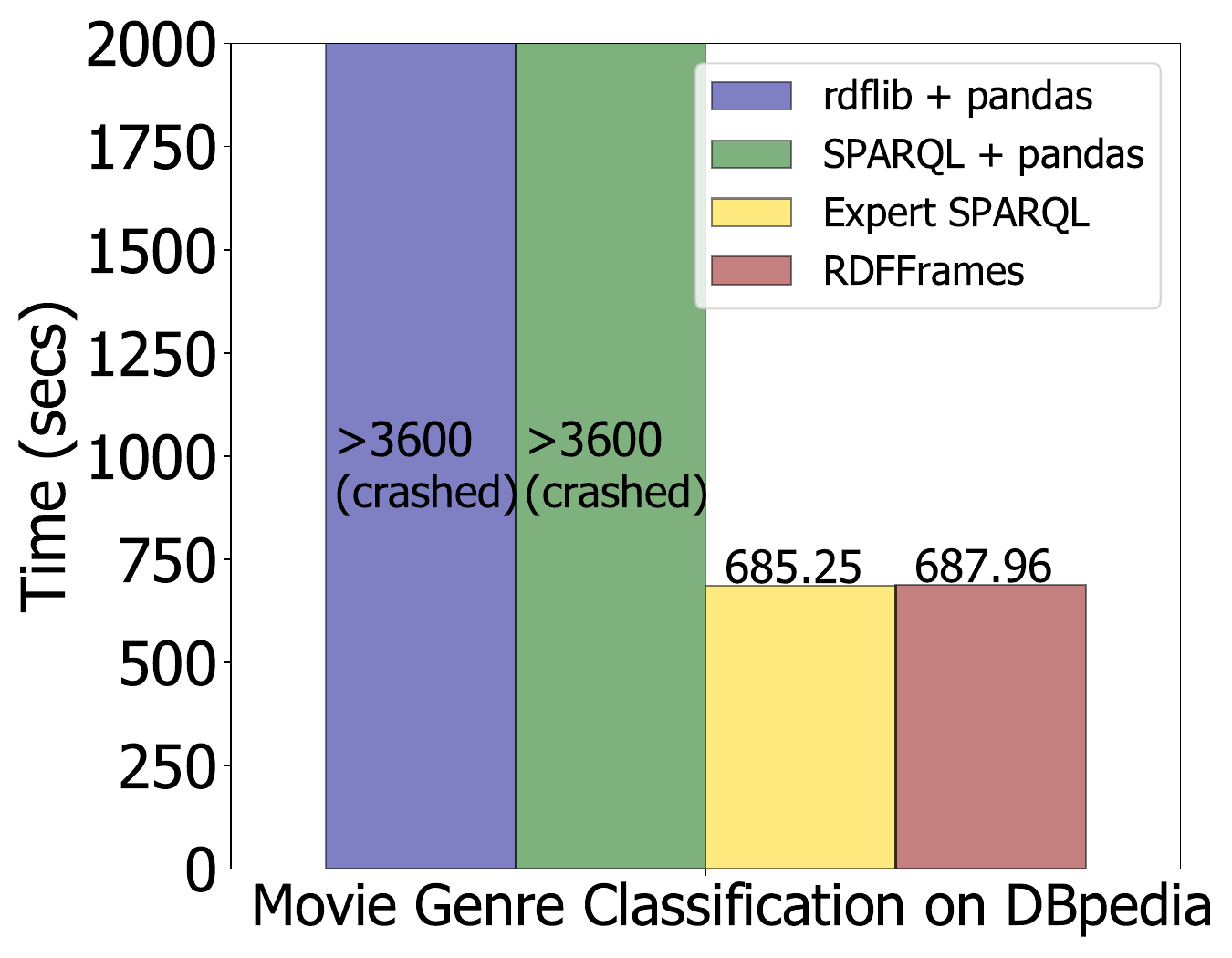}\label{subfig:movie2}}
    \subfloat[(b)]
  {\includegraphics[width=0.32\textwidth]{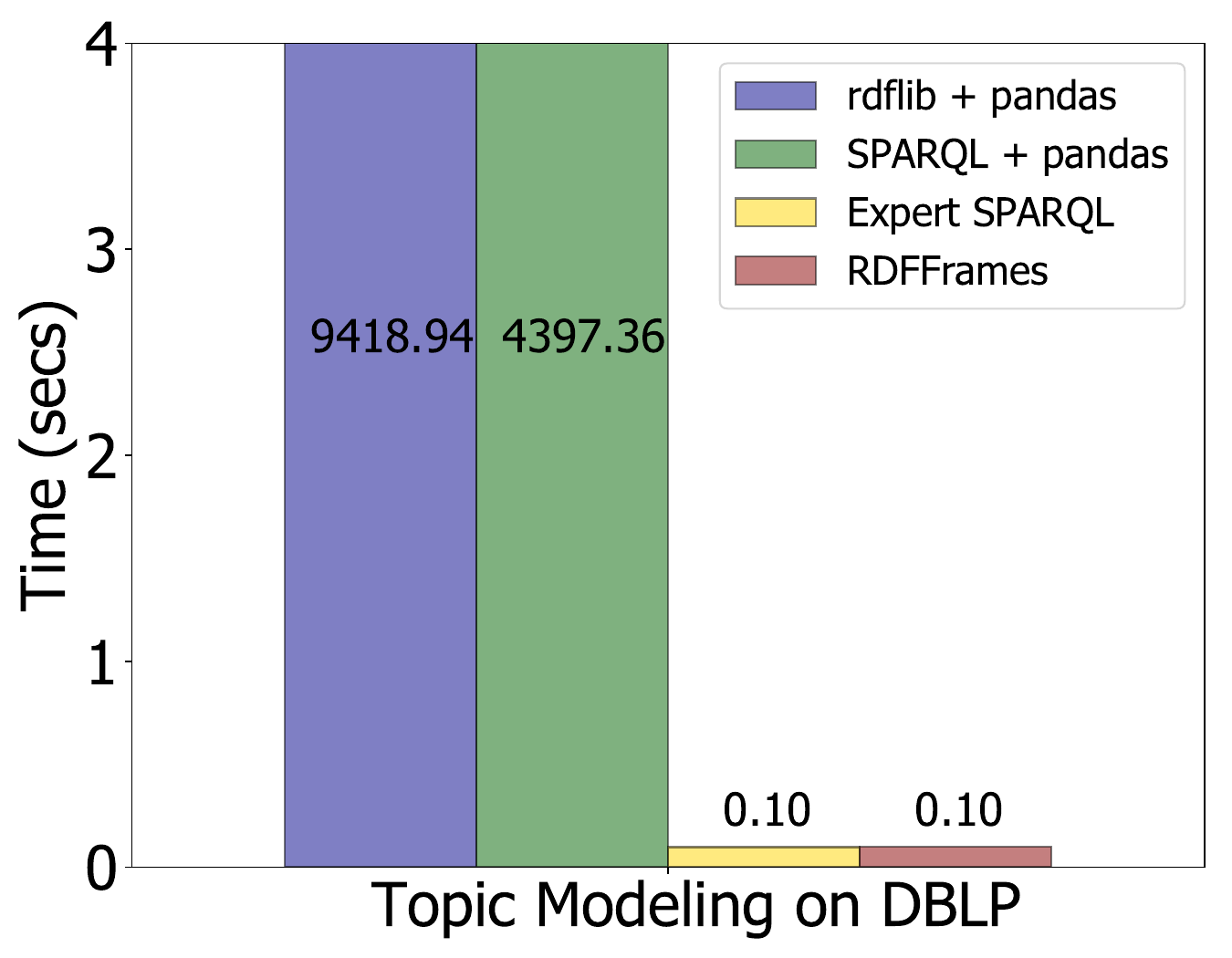}\label{subfig:topic2}}
  \subfloat[(c)]
  {\includegraphics[width=0.32\textwidth]{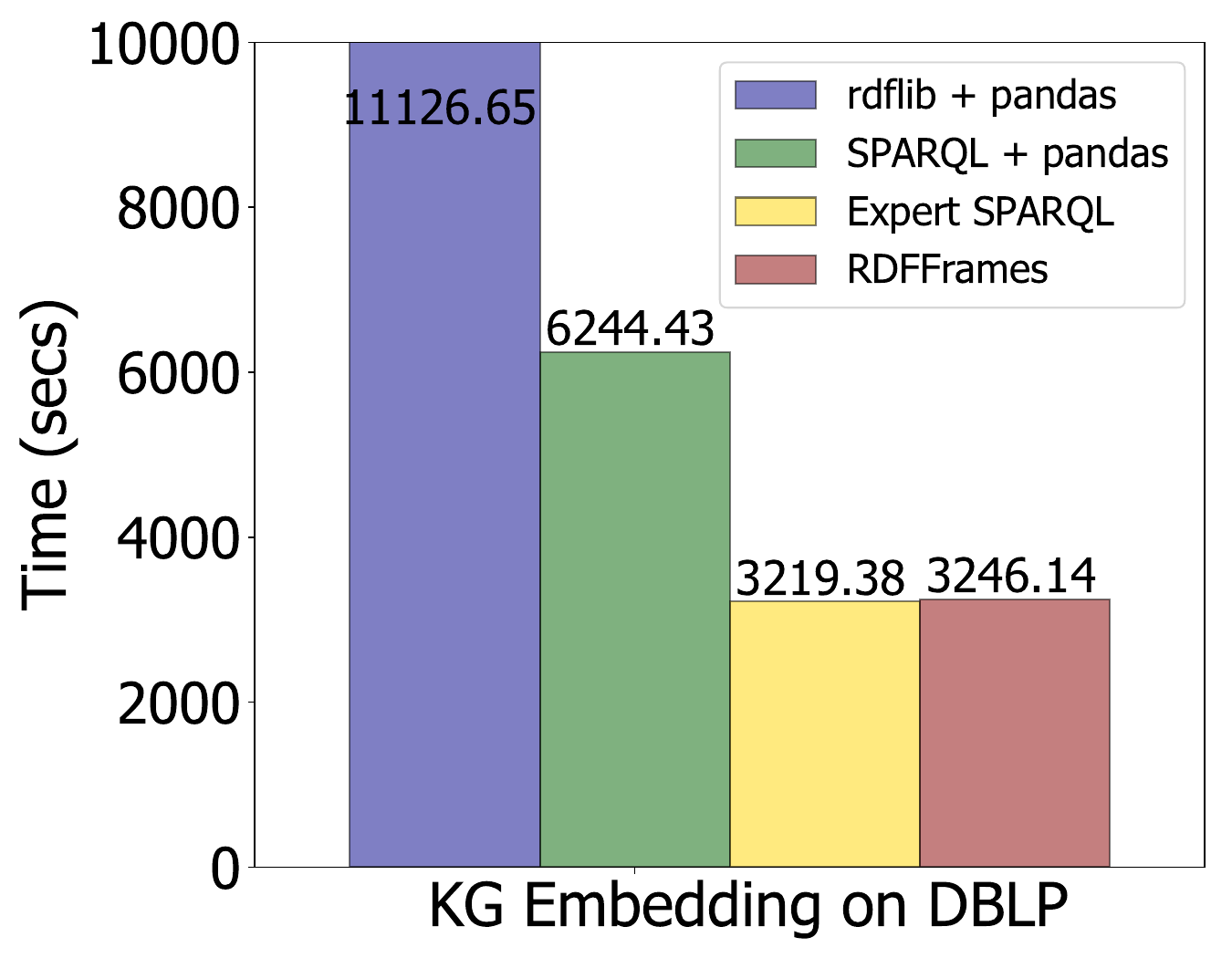}\label{subfig:kge3}}
  \vspace*{-8pt}
  \caption{Comparing \rdfframes to alternative baselines.}
  \label{fig:performance}
  \vspace{-\baselineskip}
\end{figure*}

\vspace*{2pt}
\subsubsection{Evaluating the design decisions of \rdfframes}

\vspace*{-4pt}
Figure~\ref{fig:alternatives} 
shows the running time of 
\texttt{Naive Query Generation}, \texttt{Navigation + pandas},
and \rdfframes on the three case studies.
 
\vspace{4pt}
\noindent
\textbf{Movie Genre Classification.}
This task requires heavy processing 
on the DBpedia dataset and returns a dataframe of 19,633 movies.
The results are presented in Figure~\ref{fig:alternatives}(a).

The running time of \rdfframes is 687.96 seconds. Of this time, less than 5 milliseconds is spent on
preparing the \sparql query (i.e., recording the \rdfframes operations, generating the query model, and producing the query). The remaining time is spent on issuing the query to the engine and retrieving the results.
This is typical in all our experiments: \rdfframes needs a few milliseconds to generate the \sparql query and the remaining time is spent on query processing. 
The query produced by naive query generation did not finish in one hour and we terminated it after this time,
which demonstrates the need for \rdfframes to generate optimized \sparql and not rely exclusively on the query optimizer.
The \texttt{Navigation + pandas} alternative also timed out after one hour,
which demonstrates the need for pushing computation into the engine.

\vspace{4pt}
\noindent
\textbf{Topic Modeling.}
This task requires heavy processing on the DBLP dataset and returns a dataframe of 4,209 titles.
The results are depicted in Figure~\ref{fig:alternatives}(b).
Naive query generation did well here, with the query optimizer generating a good plan for the query Nonetheless, naive query generation is 2x slower than \rdfframes.
This further demonstrates the need for generating optimized \sparql.
The \texttt{Navigation + pandas} alternative here was particularly bad,
reinforcing the need to push computation into the engine.

\vspace{4pt}
\noindent
\textbf{Knowledge Graph Embedding.}
This task keeps only triples where the object is an entity (i.e., not a literal).
It does not require heavy processing but requires handling the scalability issues of returning a huge final dataframe with all triples of interest.
The results on DBLP are shown in Figure~\ref{fig:alternatives}(c).
All the alternatives have similar performance for this task, since the required \sparql query is simple and processed well by Virtuoso, and since there is no processing required in pandas.

\vspace*{-4pt}
\subsubsection{Comparing \rdfframes With Alternative Baselines}

\vspace*{-10pt}
Figure~\ref{fig:performance} compares the running time of \rdfframes on the three case studies
to the three alternative baselines:
\texttt{rdflib + pandas},
\texttt{\sparql + pandas},
and
\texttt{Expert \sparql}.

\vspace{4pt}
\noindent
\textbf{Movie Genre Classification.}
Both the \texttt{rdflib + pandas} and \texttt{\sparql + pandas}
baselines crashed after 
more than one hour
due to scalability issues, showing that they are not viable alternatives.
On the other hand, \rdfframes and \texttt{Expert \sparql} have similar performance.
This shows that \rdfframes does not add overhead and is able
to match the performance of an expert-written \sparql query,
which is the best case for an automatic query generator.
Thus, the flexibility and usability of \rdfframes does not come at the cost of reduced performance.

\vspace{4pt}
\noindent
\textbf{Topic Modeling.}
The baselines that perform computation in pandas did not crash as before, but are orders of magnitude slower than \rdfframes and \texttt{Expert \sparql}.
In this case as well, the running time of \rdfframes 
matches the expert-written \sparql query.

\vspace{4pt}
\noindent
\textbf{Knowledge Graph Embedding.}
In this experiment, \texttt{rdflib + pandas}  is 
3x slower than \rdfframes
and \texttt{\sparql + pandas} is 2x slower than \rdfframes, while 
\rdfframes has the same performance as \texttt{Expert \sparql}.
These results reinforce the conclusions drawn earlier.

\begin{figure}
  \centering
    \includegraphics[width=1\columnwidth]{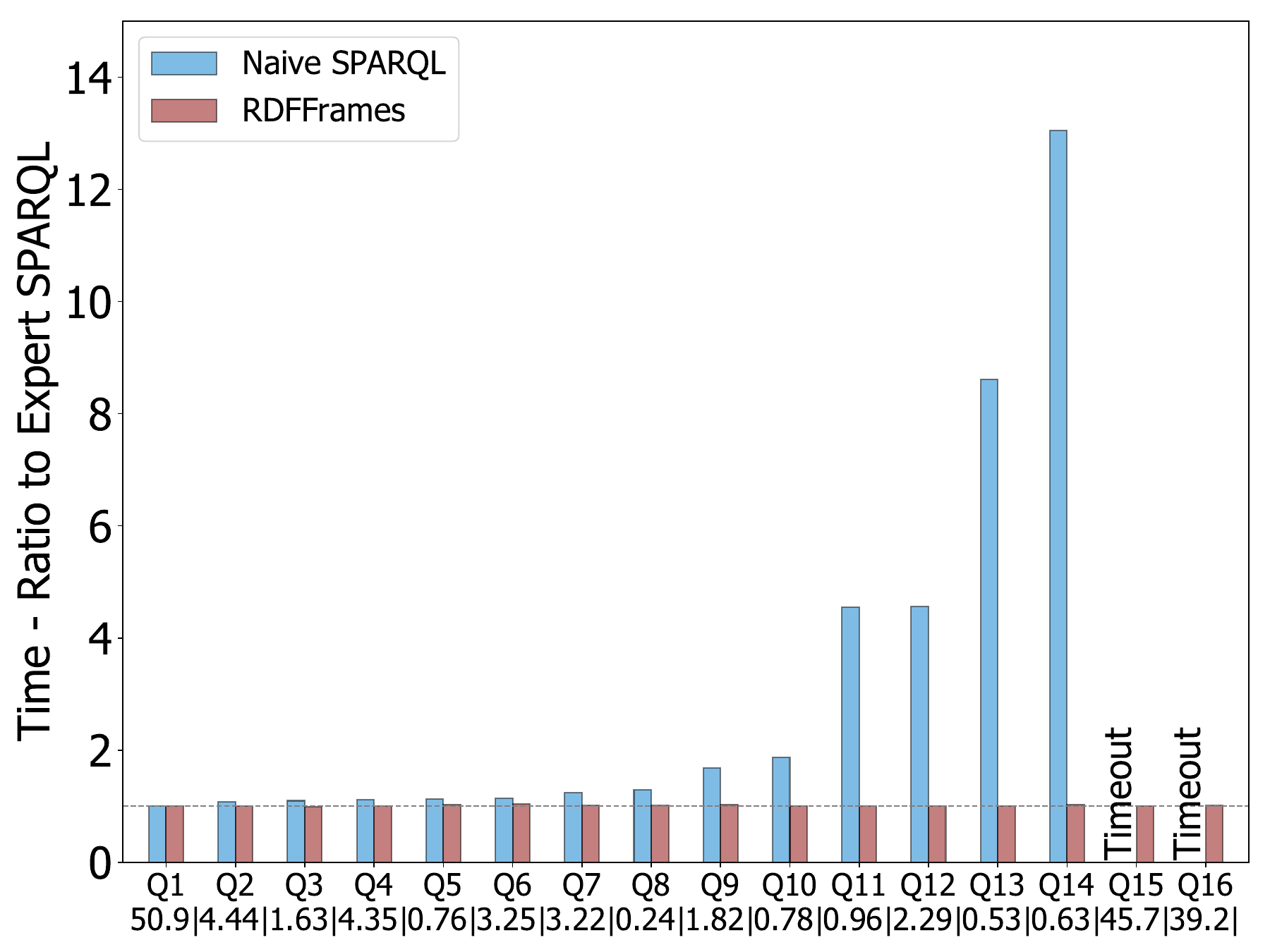}
  \vspace*{-8pt}
  \caption{Results on the synthetic workload.}
  \label{fig:synthetic}
\end{figure}

\subsection{Results on Synthetic Workload}
\label{subsec:results-synthetic}

In this experiment, we use the synthetic workload of 16 queries to do a more comprehensive evaluation of \rdfframes.
The previous section showed that 
\texttt{Navigation + pandas},
\texttt{rdflib + pandas}, and
\texttt{\sparql + pandas} are not competitive with \rdfframes.
Thus, we exclude them from this experiment.
Instead, we focus on the quality of the queries generated by \rdfframes and whether this broad set of queries shows that naive query generation would work well.
Figure~\ref{fig:synthetic} compares naive query generation and \rdfframes to expert-written \sparql.

The y-axis shows the ratio between the running time of naive query generation and expert-written \sparql, and between the running time of \rdfframes and expert-written \sparql.
Thus, expert-written \sparql is considered the gold standard and the figure shows how well the two query generation alternatives match this standard.
To improve the comparison,
the absolute running time of expert-written \sparql in seconds is shown under each query on the x-axis.
The queries are sorted in ascending order by the ratio of
naive query generation to expert-written \sparql
(the blue bars in the figure).
The dashed horizontal line represents a ratio of 1.

The ratios for \rdfframes range between 0.99 and 1.04,
which shows that \rdfframes is good at generating queries that match the performance of expert-written queries. On the other hand, the ratios for naive query generation vary widely. The first six queries have ratios between 1.01 and 1.14.
For these queries, the Virtuoso optimizer does a good job of generating a close-to-optimal plan for the naive query.
The next six queries have ratios between 1.24 and 4.56.
Here, we are seeing the weakness of naive query generation and the need for the optimizations performed by \rdfframes during query generation. 
The situation is worse for the last four queries:  naive query generation is an order of magnitude slower for two queries and the last two queries time out after one hour.
Thus, the results on this more comprehensive workload validate the quality of the queries generated by \rdfframes and the need for its sophisticated query generation algorithm.

\begin{table*}[th!]
  \caption{Running time with varying operator complexity.}
  \label{tab:cost}
  \centering
  \begin{tabular}{|l|c|c|c|c|c|} 
      \hline
   \multicolumn{2}{|c}{} & \multicolumn{4}{|c|}{\textbf{Running Time (seconds)}} \\ \hline
      \textbf{filter (on genre)} & \textbf{No.\ of Movies} & \textbf{count} & \textbf{select} & \textbf{group\_by} & \textbf{join}\\
      \hline
      Sitcom & 5015 & 0.088 & 0.114 & 0.115 & 0.342 \\ \hline 
      Sitcom, Drama, Comedy & 12115 & 0.574 & 0.635 & 0.704 &  1.838 \\ \hline 
      No filter (all genres) & 87811 & 3.26 & 3.163 & 6.286 & 77.896 \\ \hline 
  \end{tabular}
\end{table*}

  \subsection{Effect of Operator Complexity}

\label{subsec:results-complexity}
\vspace*{-3pt}

In our final experiment, we study how the complexity of various \rdfframes operators affects performance.
Unlike the previous experiment, in which we varied the complexity of large queries as indicated in Appendix~\ref{app:queries},
this experiment studies the issue of complexity at the granularity of an operator.
The cost of operators is highly dependent on the RDF engine being used
(Virtuoso in our case).
Nevertheless, we want to see if there are any patterns in performance.

To study the effect of operator complexity, we create four \rdfframes queries of increasing complexity that operate on movies in the DBpedia knowledge graph. 
The performance of these queries is shown in Table~\ref{tab:cost}.
The first query counts the number of movies. We find movies by finding entities that are the subject of a `starring' predicate. We then use the
$expand$ \rdfframes operator to get the movie titles and apply the $count$ aggregation function on these titles.
The second query uses the $select$ \rdfframes operator to retrieve all movie titles.
The third query uses the $group\_by$ operator to group movies by genre and counts the number of movies in each genre.
The fourth query is a join query that finds actors who are also movie directors. This query creates a dataset of actors and a dataset of directors, and then joins the two datasets through an inner $join$ operator on actor/director name.

We ran the four queries with $filter$ operators of varying selectivity.
In one case, we had no filter (i.e.,~we ran the query on all movies).
In the second case, we had a filter specifying that movie genre has to be one of sitcom, drama, or comedy.
In the third case, the filter specified that movie genre has to be sitcom (the most selective filter).

Each row in Table~\ref{tab:cost} represents a filter, and the number of movies retrieved by this filter is presented in the second column. The rows of the table are sorted by this column. The next four columns in each row show the running time of the four queries for this filter. In all cases, the time that \rdfframes spends to generate the \sparql query is less than one millisecond, so we do not report it separately.

Looking at each running time column from top to bottom, we see that the bigger the input data, the more time is required. Looking at each row from left to right, we see that the more complex the query, the more time is required. Both of these results are expected. Another observation about Table~\ref{tab:cost} is that the variation in running times is not excessive. Even the most expensive query, the join query with no filter, which joins a dataset of 31221 actors with a dataset of 1784 directors, runs in a reasonable 77.896 seconds.

Thus, the experiment shows that operator complexity and dataset size do have an effect on performance. The observed effect is in-line with expectations and does not affect the usability of \rdfframes. The robust performance we observe is partly due to the robustness of Virtuoso
and partly due to our process for \sparql query generation.

\section{Conclusion}
\label{sec:conclusion}

We presented \rdfframes, a framework for seamlessly integrating knowledge graphs into machine learning applications.
\rdfframes is based on a number of powerful operators for graph navigation and relational processing that enable users to generate tabular data sets from knowledge graphs using 
procedural programming idioms
that are familiar in machine learning environments such as PyData.
\rdfframes automatically converts these procedural calls to optimized \sparql queries and manages the execution of these queries on a local RDF engine or a remote \sparql endpoint,
shielding the user from all details of \sparql query execution.
We provide a Python implementation of \rdfframes that is tightly integrated with the pandas library
and experimentally demonstrate its efficiency.

Directions for future work include conducting a comprehensive user study to identify and resolve any usability-related issues that could be faced by \rdfframes users. 
A big problem in RDF is that users need to know the knowledge graph vocabulary and structure in order to effectively query it. To address this problem, one direction for future work is expanding the exploration operators of \rdfframes to include keyword search.
Testing and evaluating \rdfframes on multiple RDF engines is another possible future direction.

\bibliographystyle{ACM-Reference-Format}
\bibliography{main}

\pagebreak
\onecolumn
\begin{appendix}
  \section{Full Python Code for Case Studies}
\label{app:fullcode}
\subsection{Movie Genre Classification}
\label{app:moviecase}

\begin{lstlisting}[
  aboveskip=-0.0\baselineskip,
  belowskip=-0.0\baselineskip,
  language=Python,
  showspaces=false,
  basicstyle=\ttfamily\scriptsize,
  commentstyle=\color{gray}, 
  otherkeywords={expand, filter,execute, PANDAS_DF,group_by, join, feature_domain_range, .count, cache, unique},
  keywordstyle=\color{blue},
  caption={Full code for movie genre classification.},
  captionpos=b,
  label={lst:movidecasestudy}]

  # RDFFrames imports, graph, and prefixes
  from rdfframes.knowledge_graph import KnowledgeGraph
  from rdfframes.dataset.rdfpredicate import RDFPredicate
  from rdfframes.utils.constants import JoinType
  from rdfframes.client.http_client import HttpClientDataFormat, HttpClient
  graph = KnowledgeGraph(graph_uri='http://dbpedia.org',
                         prefixes= {'dcterms': 'http://purl.org/dc/terms/',
                                  'rdfs': 'http://www.w3.org/2000/01/rdf-schema#',
                                  'dbpprop': 'http://dbpedia.org/property/',
                                  'dbpr': 'http://dbpedia.org/resource/'}) 
  
  # RDFFrames code for creating the dataframe
  dataset = graph.feature_domain_range('dbpp:starring','movie', 'actor')
  dataset = dataset.expand('actor',[('dbpp:birthPlace', 'actor_country'),('rdfs:label', 'actor_name')])
           .expand('movie', [('rdfs:label', 'movie_name'),('dcterms:subject', 'subject'),
            ('dbpp:country', 'movie_country'),('dbpo:genre', 'genre', Optional)]).cache()
  american = dataset.filter({'actor_country':['regex(str(?actor_country),"USA")']})
  prolific = dataset.group_by(['actor']).count('movie', 'movie_count', unique=True).filter({'movie_count': ['>=100']})
  movies = american.join(prolific,'actor', OuterJoin).join(dataset, 'actor', InnerJoin)
  
  # Client and execution
  output_format = HttpClientDataFormat.PANDAS_DF
  client = HttpClient(endpoint_url=endpoint, return_format=output_format)
  df = movies.execute(client, return_format=output_format)

  # Preprocessing and preparation
  import re
  import nltk
  
  def clean(dataframe):
    for i, row in df.iterrows():
      if df.loc[i]['genre'] != None:
      value =df.at[i, 'genre']
      if re.match(regex,str(value)) is not None:
          df.at[i, 'genre'] = value.split('/')[-1]
    return dataframe

  # Remove URL from the 'genre' and convert to label keys
  df=clean(df)

  # Find the most most frequent genres
  all_genres = nltk.FreqDist(df['genre'].values)
  all_genres_df = pd.DataFrame({'genre':list(all_genres.keys()), 'Count':list(all_genres.values())})
  all_genres_df.sort_values(by=['Count'],ascending=False)
  
  # In this example, use 900 movies as a cut off for the frequent movies
  most_frequent_genres = all_genres_df[all_genres_df['Count']> 900]
  df = df[df['genre'].isin(list(most_frequent_genres['genre']))]

  # Features and factorization
  from sklearn.model_selection import train_test_split
  from sklearn.preprocessing import StandardScaler

  df= df.apply(lambda col: pd.factorize(col, sort=True)[0])
  features = ["movie_name", "actor_name", "actor_country","subject","movie_country", "subject"]
  df = df.dropna(subset=['genre'])
  x = df[features]
  y = df['genre']
  x_train, x_test, y_train, y_test = train_test_split(x, y, random_state=20)
  sc = StandardScaler()
  x_train = sc.fit_transform(x_train)
  x_test = sc.fit_transform(x_test)

  # Random Forest classifier
  from sklearn.ensemble import RandomForestClassifier

  model=RandomForestClassifier(n_estimators=100)
  model.fit(x_train,y_train)
  model.fit(x_train,y_train)
  y_pred=clf.predict(x_test)
  print("Accuracy:",metrics.accuracy_score(y_test, y_pred))
    
\end{lstlisting}
 \pagebreak
\subsection{Topic Modeling}
\label{app:topiccase}

\begin{lstlisting}[
  aboveskip=0.5\baselineskip,
  belowskip=-0.0\baselineskip,
  language=Python,
  showspaces=false,
  basicstyle=\ttfamily\scriptsize,
  commentstyle=\color{gray},
  otherkeywords={KnowledgeGraph,PANDAS_DF, execute, rdfframes,expand, filter, group_by, join, entities, count, cache, unique, select_cols,apply, lower, nltk,sklearn,TfidfVectorizer,TruncatedSVD,fit,get_feature_names,lambda},
  keywordstyle=\color{blue},
  caption={Full code for topic modeling.},
  captionpos=b,
  label={lst:topiccasestudy}]
  # RDFFrames imports, graph, prefixes, and client
  import pandas as pd
  from rdfframes.client.http_client import HttpClientDataFormat, HttpClient
  from rdfframes.knowledge_graph import KnowledgeGraph
  graph = KnowledgeGraph(
          graph_uri = 'http://dblp.l3s.de',
          prefixes = {"xsd": "http://www.w3.org/2001/XMLSchema#",
          "swrc": "http://swrc.ontoware.org/ontology#",
          "rdf": "http://www.w3.org/1999/02/22-rdf-syntax-ns#",
          "dc": "http://purl.org/dc/elements/1.1/",
          "dcterm": "http://purl.org/dc/terms/",
          "dblprc": "http://dblp.l3s.de/d2r/resource/conferences/"
      })

  output_format = HttpClientDataFormat.PANDAS_DF
  client = HttpClient(endpoint_url=endpoint, port=port,return_format=output_format)

  # RDFFrames code for creating the dataframe
  papers = graph.entities('swrc:InProceedings', paper)
  papers = papers.expand('paper',[('dc:creator', 'author'),('dcterm:issued', 'date'), ('swrc:series', 'conference'),
                        ('dc:title', 'title')]).cache()
  authors = papers.filter({'date': ['>=2005'],'conference': ['In(dblp:vldb, dblp:sigmod)']}).group_by(['author'])
                   . count('paper', 'n_papers').filter({'n_papers': '>=20', 'date': ['>=2005']})
  titles = papers.join(authors, 'author', InnerJoin).select_cols(['title'])
  df = titles.execute(client, return_format=output_format)

  # Preprocessing and cleaning
  from nltk.corpus import stopwords
  df['clean_title'] = df['title'].str.replace("[^a-zA-Z#]", " ")
  df['clean_title'] = df['clean_title'].apply(lambda x: x.lower())
  df['clean_title'] = df['clean_title'].apply(lambda x: ' '.join([w for w in str(x).split() if len(w)>3])) 
  stop_words        = stopwords.words('english')
  tokenized_doc     = df['clean_title'].apply(lambda x: x.split())
  df['clean_title'] = tokenized_doc.apply(lambda x:[item for item in x if item not in stop_words])

  # Vectorization and SVD model using the scikit-learn library
  from sklearn.feature_extraction.text import TfidfVectorizer
  from sklearn.decomposition import TruncatedSVD
  vectorizer  = TfidfVectorizer(stop_words='english', max_features= 1000, max_df = 0.5, smooth_idf=True)
  Tfidf_title = vectorizer.fit_transform(df['clean_title'])
  svd_model   = TruncatedSVD(n_components=20, algorithm='randomized',n_iter=100, random_state=122)
  svd_model.fit(Tfidf_titles)  

  # Extracting the learned topics and their keyterms
  terms = vectorizer.get_feature_names()
  for i, comp in enumerate(svd_model.components_):
      terms_comp   = zip(terms, comp)
      sorted_terms = sorted(terms_comp, key= lambda x:x[1], reverse=True)[:7]
      print_string = "Topic"+str(i)+": "
      for t in sorted_terms:
          print_string += t[0] + " "
  \end{lstlisting}

 \pagebreak
\subsection{Knowledge Graph Embedding}
\label{app:kgecase}

\vspace*{8pt}
  \begin{lstlisting}[
    aboveskip=-1.0\baselineskip,
    belowskip=-0.0\baselineskip,
    language=Python,
    showspaces=false,
    basicstyle=\ttfamily\scriptsize,
    commentstyle=\color{gray},
    otherkeywords={KnowledgeGraph, execute, rdfframes, expand, filter, join, group_by, feature_domain_range, execute, PANDAS_DF, ampligraph, train_test_split_no_unseen, ComplEx, fit,mr_score, mrr_score, evaluate_performance, expand, filter, group_by, join, entities, count, cache, unique, select_cols},
    keywordstyle=\color{blue},
    caption={Full code for knowledge graph embedding.},
    captionpos=b,
    label={lst:kgecasestudy}]
    # Get all triples where the object is a URI
  from rdfframes.knowledge_graph import KnowledgeGraph
  from rdfframes.dataset.rdfpredicate import RDFPredicate
  from rdfframes.client.http_client import HttpClientDataFormat, HttpClient    
  output_format = HttpClientDataFormat.PANDAS_DF
  client = HttpClient(endpoint_url=endpoint,
                        port=port,
                        return_format=output_format,
                        timeout=timeout,
                        default_graph_uri=default_graph_url,
                        max_rows=max_rows
                        )
  dataset = graph.feature_domain_range(s, p, o).filter({o: ['isURI']})
  df = dataset.execute(client, return_format=output_format)
    
  # Train/test split and create ComplEx model from ampligraph library
  from ampligraph.evaluation import train_test_split_no_unseen 
  triples = df.to_numpy()
  X_train, X_test = train_test_split_no_unseen(triples, test_size=10000)
  #complEx model from ampligraph library
  from ampligraph.latent_features import ComplEx
  from ampligraph.evaluation import evaluate_performance, mrr_score, hits_at_n_score
  model = ComplEx(batches_count=50,epochs=300,k=100,eta=20, optimizer='adam',optimizer_params={'lr':1e-4}, 
          loss='multiclass_nll',regularizer='LP', regularizer_params={'p':3, 'lambda':1e-5}, seed=0,verbose=True)
  model.fit(X_train)
  # Evaluate embedding model
  filter_triples = np.concatenate((X_train, X_test))
  ranks = evaluate_performance(X_test, model=model, filter_triples=filter_triples,
                                use_default_protocol=True, verbose=True)
  mr  = mr_score(ranks)
  mrr = mrr_score(ranks)
  \end{lstlisting}

\pagebreak

\section{Description of Queries in the Synthetic Workload}
\label{app:queries}
\vspace*{-3pt}
\hfill
\begin{table*}[th!]
  \caption{Description of the queries in the sytnthetic workload.\label{tab:def-queries}}
  \resizebox{\textwidth}{!}{%
  \begin{tabular}{|c|p{10cm}|p{4cm}|p{3cm}|} 
    \hline
    \textbf{Query} & \textbf{English Description} & \textbf{\rdfframes Operators}& \textbf{\sparql Features}\\
    \hline 
    Q1 & Get a list of films in DBpedia. For each film, return the actor, language, country, genre, story, and studio, in addition to the director, producer, and title (if available). & expand (including optional predicates) & OPTIONAL, DISTINCT\\
    \hline

    Q2 & Get a list of actors available in the DBpedia or YAGO graphs. & join (outer) between two graphs, filter & OPTIONAL, FILTER, UNION
    \\
    \hline
    Q3 & Get a list of American actors available in both the DBpedia and YAGO graphs. & join (inner) between two graphs, expand, filter &  FILTER
    \\
    \hline 
    Q4 & 
    Get the nationality, place of birth, and date of birth of each basketball player in DBpedia, in addition to the sponsor, name, and president of his team (if available). & join (left outer) between two expandable datasets, expand (including optional predicates) & OPTIONAL
    \\
    \hline 
    Q5 &  Get the players (athletes) in DBpedia and their teams, group by teams, count players, and expand the team's name. &  group\_by, count, expand & GROUP BY, COUNT, DISTINCT
    \\
    \hline
    Q6 &  For films in DBpedia that are produced by any studio in India or the United States excluding 'Eskay Movies', and that have one of the following genres (film score, soundtrack, rock music, house music, or dubstep),
    get the actor, director, producer, time and language. & expand, filter & FILTER
    \\ 
    \hline
    Q7 & For the films in DBpedia, get actors, director, country, producer, language, title, genre, story, and studio. Filter on country, studio, genre, and runtime. &  expand, filter & FILTER
    \\
    \hline
    Q8 & Get the nationality, place of birth, and date of birth of each basketball player in DBpedia, in addition to the sponsor, name, and president of his team. & join (inner) between two expandable datasets, expand & Multiple conjunctive graph patterns
    \\
    \hline 
    Q9 & Get the list of basketball players in DBpedia, their teams, and the number of players on each team. &  group\_by, count, expand & GROUP BY, COUNT, DISTINCT
    \\
    \hline
    Q10 &  For films in DBpedia that are produced by any studio in India or the United States excluding 'Eskay Movies', and that have one of the following genres (film score, soundtrack, rock music, house music, or dubstep),
    get the actor and language, in addition to the producer, director, and title (if available). & expand (including optional predicates), filter & OPTIONAL, FILTER
    \\ 
    \hline
    Q11 & Get the list of athletes in DBpedia. For each athlete, return his birthplace and the number of athletes who were born in that place. &  group\_by, count, expand & GROUP BY, COUNT, DISTINCT
    \\ 
    \hline
    Q12 & Get the pairs of films in DBpedia that belong to the same genre and are produced in the same country. For each film in each pair, return the actor, country, story, language, genre, and studio, in addition to the director, producer, and title (if available). & group\_by on multiple columns, count, expand (including optional predicates) &  GROUP BY, COUNT, OPTIONAL, DISTINCT
    \\
    \hline
    Q13 & Get the sponsor, name, president, and the number of basketball players of each basketball team in DBpedia. & join (inner) between two
    datasests (expandable, group\_by) & GROUP BY, COUNT, DISTINCT
    \\
    \hline 
    Q14 & Get the sponsor, name, president, and the number of basketball players (if available) of each basketball team in DBpedia. & join (left outer) between two datasets (expandable, group\_by), expand (including optional predicates) & GROUP BY, COUNT, OPTIONAL, DISTINCT
    \\
    \hline 
    Q15 & Get a list of the books in DBpedia that were written by American authors who wrote more than two books. For each author, return the birth place, country, and education, and for each book return the title, subject, country (if available), and publisher (if available). &  join (outer), group\_by, count, expand (including optional predicates), filter & GROUP BY, COUNT, HAVING,
    OPTIONAL, FILTER, UNION
    \\
    \hline

    Q16 & Get a list of people in the DBpedia graph who were born in the United States. Get a list of authors from the DBLP graph who have publications dated after 2015. Get a list of people in the YAGO graph who are citizens of the United States. Join the three lists retrieved from the three graphs on name. & join (full outer) between three graphs, expand (including optional predicates), filter & OPTIONAL, FILTER, DISTINCT, UNION
    \\
    \hline
  
  \end{tabular}
  }
\end{table*}

\pagebreak
\section{Naive \sparql Query for Movie Genre Classification}
\label{app:naiveQuery}
\vspace*{-2pt}

\begin{lstlisting}[
  breaklines=true,
  columns=fullflexible,
  aboveskip=-0.0\baselineskip,
  belowskip=-0.0\baselineskip,
  language=SQL,
  showspaces=false,
  basicstyle=\ttfamily\scriptsize,
  commentstyle=\color{gray},
  keywordstyle=\color{blue},
  otherkeywords={OPTIONAL, FILTER},
  caption={Naive \sparql query corresponding to the \sparql query shown in Listing~\ref{lst:genresparql}.},
  captionpos=b,
  label={lst:naivemovie}
  ]
  PREFIX  dbpp: <http://dbpedia.org/property/>
  PREFIX  dcterms: <http://purl.org/dc/terms/>
  PREFIX  rdfs: <http://www.w3.org/2000/01/rdf-schema#>
  PREFIX  dbpo: <http://dbpedia.org/ontology/>
  PREFIX  dbpr: <http://dbpedia.org/resource/> 
  SELECT DISTINCT  ?actor_name ?movie_name ?actor_country ?subject ?genre
  FROM <http://dbpedia.org> WHERE
    {{{ SELECT  * WHERE
              {{ SELECT  * WHERE
                  { { SELECT  ?movie ?actor WHERE
                        { ?movie  dbpp:starring  ?actor } }
                    { SELECT  ?actor ?actor_country WHERE
                        { ?actor  dbpp:birthPlace  ?actor_country } }
                    { SELECT  ?actor ?actor_name WHERE
                      { ?actor  rdfs:label  ?actor_name } }
                    { SELECT  ?movie ?movie_name WHERE
                      { ?movie  rdfs:label  ?movie_name } }
                    { SELECT  ?movie ?subject WHERE
                        { ?movie  dcterms:subject  ?subject } }
                    { SELECT  ?movie ?movie_country WHERE
                         { ?movie  dbpp:country  ?movie_country } }
                    { SELECT  ?actor ?actor_country WHERE
                        { ?actor  dbpp:birthPlace  ?actor_country
                          FILTER regex(str(?actor_country), "USA") } }
                    { SELECT  ?movie ?genre WHERE
                        { OPTIONAL
                            { ?movie  dbpo:genre  ?genre } } } } }
              OPTIONAL
                { SELECT DISTINCT  ?actor (COUNT(DISTINCT ?movie) AS ?movie_count) WHERE
                    { { SELECT  ?movie ?actor WHERE
                          { ?movie  dbpp:starring  ?actor } }
                      { SELECT  ?actor ?actor_country WHERE
                          { ?actor  dbpp:birthPlace  ?actor_country } }
                      { SELECT  ?actor ?actor_name WHERE
                          { ?actor  rdfs:label  ?actor_name } }
                      { SELECT  ?movie ?movie_name WHERE
                          { ?movie  rdfs:label  ?movie_name } }
                      { SELECT  ?movie ?subject WHERE
                          { ?movie  dcterms:subject  ?subject } }
                      { SELECT  ?movie ?movie_country WHERE
                          { ?movie  dbpp:country  ?movie_country } }
                      { SELECT  ?movie ?genre WHERE
                          { OPTIONAL
                              { ?movie  dbpo:genre  ?genre } } } }
                  GROUP BY ?actor
                  HAVING ( COUNT(DISTINCT ?movie) >= 100 ) } } }
        UNION
          { SELECT  * WHERE
            { { SELECT DISTINCT  ?actor (COUNT(DISTINCT ?movie) AS ?movie_count) WHERE
                  { { SELECT  ?movie ?actor WHERE
                        { ?movie  dbpp:starring  ?actor } }
                    { SELECT  ?actor ?actor_country WHERE
                        { ?actor  dbpp:birthPlace  ?actor_country } }
                    { SELECT  ?actor ?actor_name WHERE
                        { ?actor  rdfs:label  ?actor_name } }
                    { SELECT  ?movie ?movie_name WHERE
                        { ?movie  rdfs:label  ?movie_name } }
                    { SELECT  ?movie ?subject WHERE
                        { ?movie  dcterms:subject  ?subject } }
                    { SELECT  ?movie ?movie_country WHERE
                        { ?movie  dbpp:country  ?movie_country } }
                    { SELECT  ?movie ?genre WHERE
                        { OPTIONAL { ?movie  dbpo:genre  ?genre } } } }
                  GROUP BY ?actor
                  HAVING ( COUNT(DISTINCT ?movie) >= 100 ) }
                OPTIONAL
                { SELECT  * WHERE
                    { { SELECT  ?movie ?actor WHERE
                          { ?movie  dbpp:starring  ?actor } }
                      { SELECT  ?actor ?actor_country WHERE
                          { ?actor  dbpp:birthPlace  ?actor_country } }
                      { SELECT  ?actor ?actor_name WHERE
                          { ?actor  rdfs:label  ?actor_name } }
                      { SELECT  ?movie ?movie_name WHERE
                          { ?movie  rdfs:label  ?movie_name } }
                      { SELECT  ?movie ?subject WHERE
                          { ?movie  dcterms:subject  ?subject } }
                      { SELECT  ?movie ?movie_country WHERE
                        { ?movie  dbpp:country  ?movie_country } }
                      { SELECT  ?actor ?actor_country WHERE
                          { ?actor  dbpp:birthPlace  ?actor_country
                            FILTER regex(str(?actor_country), "USA") } }
                      { SELECT  ?movie ?genre WHERE
                          { OPTIONAL { ?movie  dbpo:genre  ?genre } } } } } } } } }
  \end{lstlisting}

\vspace*{5pt}
\pagebreak
\section{Naive \sparql Query for Topic Modeling}
\label{app:naiveQuery2}
\hfill

\begin{lstlisting}[
breaklines=true,
columns=fullflexible,
aboveskip=-1.0\baselineskip,
belowskip=-0.0\baselineskip,
language=SQL,
showspaces=false,
basicstyle=\ttfamily\scriptsize,
commentstyle=\color{gray},
keywordstyle=\color{blue},
otherkeywords={OPTIONAL, FILTER},
caption={Naive \sparql query corresponding to the \sparql query shown in Listing~\ref{lst:topicsparql}.},
captionpos=b,
label={lst:naivetopic}
]
PREFIX  swrc: <http://swrc.ontoware.org/ontology#>
PREFIX  rdf:  <http://www.w3.org/1999/02/22-rdf-syntax-ns#>
PREFIX  owl:  <http://www.w3.org/2002/07/owl#>
PREFIX  dcterm: <http://purl.org/dc/terms/>
PREFIX  xsd:  <http://www.w3.org/2001/XMLSchema#>
PREFIX  rdfs: <http://www.w3.org/2000/01/rdf-schema#>
PREFIX  foaf: <http://xmlns.com/foaf/0.1/>
PREFIX  dc:   <http://purl.org/dc/elements/1.1/>
PREFIX  dblprc: <http://dblp.l3s.de/d2r/resource/conferences/>

SELECT  ?title
FROM <http://dblp.l3s.de>
WHERE
  { 
    {SELECT ?paper WHERE {?paper rdf:type swrc:InProceedings}}.
    {SELECT ?paper ?author WHERE {?paper dc:creator ?author}}.       
    {SELECT ?paper ?date WHERE {?paper dcterm:issued  ?date }}.   
    {SELECT ?paper ?conference WHERE {?paper swrc:series ?conference}} .
    {SELECT ?paper ?title WHERE {?paper  dc:title  ?title}}.
    {SELECT ?paper ?date WHERE {?paper dcterm:issued  ?date FILTER ( year(xsd:dateTime(?date)) >= 2005 )  }}

    { SELECT ?author WHERE
      {
       { SELECT  ?author COUNT(?paper) as  ?count_paper
          WHERE
            {
              {SELECT ?paper WHERE {?paper rdf:type swrc:InProceedings}}.
              {SELECT ?paper ?author WHERE {?paper dc:creator ?author}}.       
              {SELECT ?paper ?date WHERE {?paper dcterm:issued  ?date }}.   
              {SELECT ?paper ?conference WHERE {?paper swrc:series ?conference}} .
              {SELECT ?paper ?title WHERE {?paper  dc:title  ?title}}.
              {SELECT ?paper ?date WHERE {?paper dcterm:issued  ?date FILTER ( year(xsd:dateTime(?date)) >= 2000 )  }} .
              {SELECT ?paper ?conference WHERE {?paper swrc:series ?conference FILTER( ?conference IN (dblprc:vldb, dblprc:sigmod) )}}
            }
          GROUP BY ?author
        }
        FILTER ( ?count_paper >= 20 ) 
      }
    }
  }
\end{lstlisting}

\end{appendix}
\end{document}